\begin{document}
\newcommand{\w}{,}
\newcommand{\PC}[1]{\footnote{{\bf PIER:} #1}} % Pier comment
\newcommand{\nul}{\varepsilon}% nul instruction
\newcommand{\empt}{\lambda}% empty word
\newcommand{\nO}{{\mathbb N}}
\newcommand{\mN}{\nO_1}
\newcommand{\mNs}[1]{\nO_{#1}}
\newcommand{\natN}{\nO}
\newcommand{\mZ}{{\mathbb Z}}
\newcommand{\mb}[1]{\mathbf{#1}}
\newcommand{\ms}[1]{\mathscr{#1}}
\newcommand{\ra}{\rightarrow}
\newcommand{\RRMsc}{Register machines}
\newcommand{\RM}{register machine }
\newcommand{\RMs}{register machines }
\newcommand{\RMc}{register machine}
\newcommand{\RMsc}{register machines}
\newcommand{\acc}{{acc}}
\newcommand{\sacc}{\mbox{{\tt yes}}}
\newcommand{\icod}{{icod}}
\newcommand{\ncod}{{ncod}}
\newcommand{\srej}{\mbox{{\tt no}}}
\newcommand{\rej}{{rej}}
\newcommand{\inp}{{in}}
\newcommand{\et}{\odot}
\newcommand{\Ra}{\Rightarrow}
\newcommand{\fb}[1]{\fbox{$#1$}}
\newcommand{\nprefix}{{\natN\;}}
\newcommand{\ls}[1]{\mbox{\sf LS}_{#1}}
\newcommand{\sls}[1]{\mbox{\sf SLS}_{#1}}
\newcommand{\fin}{\mbox{\sf FIN}}
\newcommand{\reg}{\mbox{\sf REG}}
\newcommand{\lin}{\mbox{\sf LIN}}
\newcommand{\cf}{\mbox{\sf CF}}
\newcommand{\ol}{\mbox{\sf 0L}}
\newcommand{\eol}{\mbox{\sf E0L}}
\newcommand{\tol}{\mbox{\sf T0L}}
\newcommand{\etol}{\mbox{\sf ET0L}}
\newcommand{\cs}{\mbox{\sf CS}}
\newcommand{\mat}{\mbox{\sf MAT}}
\newcommand{\re}{\mbox{\sf RE}}
\newcommand{\nfin}{\nprefix\mbox{\sf FIN}}
\newcommand{\nreg}{\nprefix\mbox{\sf REG}}
\newcommand{\nlin}{\nprefix\mbox{\sf LIN}}
\newcommand{\ncf}{\nprefix\mbox{\sf CF}}
\newcommand{\ncs}{\nprefix\mbox{\sf CS}}
\newcommand{\nmat}{\nprefix\mbox{\sf MAT}}
\newcommand{\nre}{\nprefix\mbox{\sf RE}}
\newcommand{\nOre}{{\ensuremath\natN_0\mbox{\sf RE}}}
\newcommand{\npre}[1]{{\ensuremath{\natN}_{#1}\;\mbox{\sf RE}}}
\newcommand{\car}[1]{{\ensuremath\nprefix\mbox{\sf CP}_{#1}}}
\newcommand{\symp}[3]{\nprefix\mbox{\sf OP}_{#1}(\mbox{sym}_{#2},\mbox{anti}_{#3})}
\newcommand{\aasymp}[3]{\nprefix\mbox{\sf aOP}_{#1}(\mbox{sym}_{#2},\mbox{anti}_{#3})}
\newcommand{\cat}[2]{\nprefix\mbox{\sf OP}_{#1}(\mbox{cat}_{#2})}
\newcommand{\pcat}[2]{\nprefix\mbox{\sf OP}_{#1}(\mbox{pcat}_{#2})}
\newcommand{\acat}[2]{\nprefix\mbox{\sf aOP}_{#1}(\mbox{cat}_{#2})}
\newcommand{\nacat}[3]{{\natN}_{#1}\;\mbox{\sf aOP}_{#2}(\mbox{cat}_{#3})}
\newcommand{\apcat}[2]{\nprefix\mbox{\sf aOP}_{#1}(\mbox{pcat}_{#2})}
\newcommand{\napcat}[3]{{\natN}_{#1}\;\mbox{\sf aOP}_{#2}(\mbox{pcat}_{#3})}
\newcommand{\ncat}[3]{{\natN}^{#3}\;\mbox{\sf OP}_{#1}(\mbox{cat}_{#2})}
\newcommand{\spl}[6]{\mbox{\sf SPL}_{#1}(#2, #3, #4, #5, #6)}
\newcommand{\splim}[6]{\mbox{\sf SPL}_{im, #1}(#2, #3, #4, #5, #6)}
\newcommand{\lan}[1]{\mbox{\sf L}(#1)}
\newcommand{\num}[1]{\mbox{\sf N}(#1)}
\newcommand{\vecc}[1]{\mbox{\sf W}(#1)}
\newcommand{\vnum}[2]{\mbox{\sf N}^{#1}(#2)}
\newcommand{\bnum}[2]{\mbox{\sf N}_{#1}(#2)}
\newcommand{\vbnum}[3]{\mbox{\sf N}^{#1}_{#2}(#3)}
\newcommand{\bvnum}[3]{\mbox{\sf N}_{#1}^{#2}(#3)}
\newcommand{\splg}[6]{\mbox{\sf SPLG}_{#1}(#2, #3, #4, #5, #6)}
\newcommand{\tspl}[5]{\mbox{\sf tSPL}_{#1}(#2, #3, #4, #5)}
\newcommand{\tsplg}[5]{\mbox{\sf tSPLG}_{#1}(#2, #3, #4, #5)}
\newcommand{\pncat}[3]{{\natN}^{#3}\;\mbox{\sf OP}_{#1}(\mbox{pcat}_{#2})}
\newcommand{\ancat}[3]{{\natN}^{#3}\;\mbox{\sf aOP}_{#1}(\mbox{cat}_{#2})}
\newcommand{\apncat}[3]{{\natN}^{#3}\;\mbox{\sf aOP}_{#1}(\mbox{pcat}_{#2})}
\newcommand{\catc}[2]{\nprefix\mbox{\sf OP}_{#1,-c}(\mbox{cat}_{#2})}
\newcommand{\scatc}[2]{\nprefix\mbox{\sf OP}_{#1,s, -c}(\mbox{cat}_{#2})}
\newcommand{\pcatc}[2]{\nprefix\mbox{\sf OP}_{#1,-c}(\mbox{pcat}_{#2})}
\newcommand{\acatc}[2]{\nprefix\mbox{\sf aOP}_{#1,-c}(\mbox{cat}_{#2})}
\newcommand{\apcatc}[2]{\nprefix\mbox{\sf aOP}_{#1,-c}(\mbox{pcat}_{#2})}
\newcommand{\ncatc}[3]{{\natN}^{#3-#2}\;\mbox{\sf OP}_{#1,-c}(\mbox{cat}_{#2})}
\newcommand{\sncatc}[3]{{\natN}^{#3-#2}\;\mbox{\sf OP}_{#1,s,-c}(\mbox{cat}_{#2})}
\newcommand{\pncatc}[3]{{\natN}^{#3-#2}\;\mbox{\sf OP}_{#1,-c}(\mbox{pcat}_{#2})}
\newcommand{\ancatc}[3]{{\natN}^{#3-#2}\;\mbox{\sf aOP}_{#1,-c}(\mbox{cat}_{#2})}
\newcommand{\apncatc}[3]{{\natN}^{#3-#2}\;\mbox{\sf aOP}_{#1,-c}(\mbox{pcat}_{#2})}
\newcommand{\snp}[9]{\nprefix\mbox{\sf SN}_{#1}^{#2}{\sf P}_{#3}(rule_{#4}, cons_{#5}, gen_{#6}, dlay_{#7}, forg_{#8}, outd_{#9}}
\newcommand{\snpp}[9]{\nprefix\mbox{\sf SN}_{#1}^{#2}{\sf P}_{#3}(rule_{#4}^+, cons_{#5}, gen_{#6}, dlay_{#7}, forg_{#8}, outd_{#9}}
\newcommand{\asnp}[7]{\nprefix\mbox{\sf ASN}{\sf P}_{#1}(rule_{#2}, cons_{#3}, gen_{#4}, forg_{#5}, outd_{#6}, bound_{#7})}
\newcommand{\md}[3]{\nprefix\mbox{\sf OP}_{#1}(\mbox{mate}_{#2},\mbox{drip}_{#3})}
\newcommand{\nmd}[4]{\ensuremath{\natN}_{#1}\;\mbox{\sf OP}_{#2}(\mbox{mate}_{#3},\mbox{drip}_{#4})}
\newcommand{\tp}[3]{\nprefix\mbox{\sf OtP}_{#1}(\mbox{sym}_{#2},\mbox{anti}_{#3})}
\newcommand{\tpp}[3]{\nprefix\mbox{\sf Ot}^\prime{\sf P}_{#1}(\mbox{sym}_{#2},\mbox{anti}_{#3})}
\newcommand{\ssymp}[4]{\nprefix\mbox{\sf O}_{#1}{\sf P}_{#2}(\mbox{sym}_{#3},\mbox{anti}_{#4})}
\newcommand{\assymp}[4]{\nprefix\mbox{\sf aO}_{#1}{\sf P}_{#2}(\mbox{sym}_{#3},\mbox{anti}_{#4})}
\newcommand{\stp}[4]{\nprefix\mbox{\sf O}_{#1}{\sf tP}_{#2}(\mbox{sym}_{#3},\mbox{anti}_{#4})}
\newcommand{\stpp}[4]{\nprefix\mbox{\sf O}_{#1}{\sf t}^\prime{\sf P}_{#2}(\mbox{sym}_{#3},\mbox{anti}_{#4})}
\newcommand{\sps}[4]{#1/a^{#2} \ra a^{#3};#4}
\newcommand{\spf}[1]{a^{#1} \ra \epsilon}
\newcommand{\inn}[3]{(#1, #2, #3)}
\newcommand{\im}[3]{(#1, #2^-, #3)}
\newcommand{\itt}[3]{(#1, #2, #3)}
\newcommand{\ip}[3]{(#1, #2^+, #3)}
\newcommand{\iz}[3]{(#1, #2^{=0}, #3)}
\newcommand{\imm}[4]{(#1, #2^-, #3, #4)}
\newcommand{\imp}[5]{(#1, #2^-, #3^+, #4, #5)}
\newcommand{\impp}[4]{(#1, #2^-, #3^+, #4)}
\newcommand{\izz}[4]{(#1, #2^{=0}, #3, #4)}
\newcommand{\is}[1]{^\bullet#1}
\newcommand{\os}[1]{#1^\bullet}
\newcommand{\an}[2]{(#2; \Out/ #1; \In)}
\newcommand{\oo}[1]{(#1; \Out)}
\newcommand{\ii}[1]{(#1; \In)}
\newcommand{\partI}[1]{{\it Part I:} {\it (#1)}}
\newcommand{\partII}[1]{{\it Part II:} {\it (#1)}}
\newcommand{\St}[1]{\stackrel{#1}{\Rightarrow}}
\newcommand{\Stp}[2]{\stackrel{#1}{\Rightarrow^{#2}}}
\newcommand{\stackrel{}{\input{./blackArrow.pstex_t}}}[1]{\stackrel{#1}{\input{./blackArrow.pstex_t}}}
\newcommand{\stackrel{}{\input{./blackArrow.pstex_t}^{}}}[2]{\stackrel{#1}{\input{./blackArrow.pstex_t}^{#2}}}
\newcommand{\st}[1]{\stackrel{#1}{\rightarrow}}
\newcommand{\bst}[1]{\stackrel{#1}{{\mathbf \leadsto}}}
\newcommand{\sst}[1]{\stackrel{#1}{\leftrightarrow}}
\newcommand{\stt}[2]{\stackrel{#1}{#2}}
\newcommand{\sympn}[4]{{\natN}_{#1}\;\mbox{\sf OP}_{#2}(\mbox{sym}_{#3},\mbox{anti}_{#4})}
\newcommand{\tpn}[4]{{\natN}_{#1}\;\mbox{\sf OtP}_{#2}(\mbox{sym}_{#3},\mbox{anti}_{#4})}
\newcommand{\psymp}[3]{\nprefix\mbox{\sf PP}_{#1}(\mbox{psym}_{#2},\mbox{anti}_{#3})}
\newcommand{\fsymp}[3]{\nprefix\mbox{\sf PP}_{#1}(\mbox{fsym}_{#2},\mbox{anti}_{#3})}
\newcommand{\ini}{{in}}
\newcommand{\trap}{\star}
\newcommand{\last}{\otimes}
\newcommand{\Out}{\mbox{out}}
\newcommand{\In}{\mbox{in}}
\newcommand{\B}[1]{\mathord{\Bbb #1}}
\newcommand{\NOP}{\B{N}\;\mbox{\sf OP}}
\newcommand{\C}{{\bf PIER }}
\newcommand{\foutje}[1]{\marginpar{\tt\tiny #1}}
\renewcommand{\foutje}[1]{}
\newcommand{\tto}[1]{\stackrel{#1}{\to}}
\newcommand{\ul}[1]{{\underline #1}}
\newcommand{\mono}{1}
\newcommand{\lre}[1]{{\ensuremath #1 \;\mbox{\sf RE}}}
\newcommand{\traces}[4]{#4 \;\mbox{\sf LP}_{#1}(\mbox{sym}_{#2},\mbox{anti}_{#3})}
\newcommand{\rtraces}[3]{\mbox{\sf rLP}_{#1}(\mbox{sym}_{#2},\mbox{anti}_{#3})}
\newcommand{\pt}{P/T $\!^\prime$ }
\newtheorem{prop}{Proposition}
\newtheorem{suggestion}{Suggestion for research}
\newcommand{\rr}[1]{\hspace*{2pt}\framebox{$#1$}\hspace*{2pt}}
\newcommand{\R}[1]{\hspace*{2pt}\framebox{$\rule{0cm}{.3cm}#1$}\hspace*{2pt}}
\newcommand{\fig}[1]{Fig.\ \ref{#1}}
\def\endmarkx{\hskip 2em$\Diamond$\par}
\def\exx{\trivlist \item[]}
\def\endexx{\null\hfill\endmarkx\endtrivlist}

\newcommand{\makeset}[2]{\ensuremath{ \{ #1 \: | \: #2 \} }}
\newcommand{\makesetbig}[2]{\ensuremath{ \big\{ \: #1 \; \big| \; #2 \: \big\} }}
\newcommand{\mymod}[1]{\;(\textup{mod\ }#1)}
\newcommand{\VALC}[1]{\textup{VALC}(#1)}
\renewcommand{\emptyset}{\varnothing}
\renewcommand{\epsilon}{\varepsilon}

\newcommand{\firstmarker}{\dag}
\newcommand{\secondmarker}{\ddag}

\newlength{\cwidth}
\newcommand{\cents}{\settowidth{\cwidth}{c}% code for the cent sign by Michal Kunc
\divide\cwidth by2
\advance\cwidth by-.1pt
c\kern-\cwidth
\vrule width .1pt depth.2ex height1.2ex
\kern\cwidth}

\def\sh{\mathbin{\mathchoice
{\rule{.3pt}{1ex}\rule{.3em}{.3pt}\rule{.3pt}{1ex}
\rule{.3em}{.3pt}\rule{.3pt}{1ex}}
{\rule{.3pt}{1ex}\rule{.3em}{.3pt}\rule{.3pt}{1ex}
\rule{.3em}{.3pt}\rule{.3pt}{1ex}}
{\rule{.2pt}{.7ex}\rule{.2em}{.2pt}\rule{.2pt}{.7ex}
\rule{.2em}{.2pt}\rule{.2pt}{.7ex}}
{\rule{.3pt}{1ex}\rule{.3em}{.3pt}\rule{.3pt}{1ex}
\rule{.3em}{.3pt}\rule{.3pt}{1ex}}\mkern2mu}}
\newcommand{\shuff}{\sh}

\title{On Languages Accepted by\\ P/T Systems Composed of {\it joins}}
\def\titlerunning{On Languages Accepted by P/T Systems Composed of {\it joins}}
\def\authorrunning{P.~Frisco, O.\,H.~Ibarra}
\author{Pierluigi Frisco
\institute{School of Mathematical and Computer Sciences -- 
Heriot-Watt University\\
EH14 4AS Edinburgh -- UK}
\email{pier@macs.hw.ac.uk}
\and
Oscar H. Ibarra
\institute{Department of Computer Science --
University of California\\
Santa Barbara -- CA 93106 -- USA}
\email{ibarra@cs.ucsb.edu}
}
\maketitle

\begin{abstract}
Recently, some studies linked the computational power of abstract computing systems based on multiset rewriting to models of Petri nets and the computation power of these nets to their topology.
In turn, the computational power of these abstract computing devices can be understood by just looking at their {\it topology}, that is, information flow. 

Here we continue this line of research introducing {\em J languages} and proving that they can be accepted by place/transition systems whose underlying net is composed only of {\em joins}.
Moreover, we investigate how J languages relate to other families of formal languages.
In particular, we show that every J  language
can be accepted by a  $log~n$ space-bounded
non-deterministic Turing machine with a one-way
read-only input.  We also show that every J
language has a semilinear Parikh map and that J languages and context-free languages (CFLs) are incomparable.
For example, the CFL,  $\{x\#x^R ~|~ x \in \{0,1\}^+\}$,  is
not a J language, but there are non-CFLs that are J languages.
\end{abstract}

\section{Introduction}
\label{sec:intr}
In \cite{Fri05-1} a study on models of Petri nets linking their topological structure to the families of languages they can accept/generate was started.
In particular this study concentrated on Petri nets whose topological structure (that is, their underlying net) was composed only of specific {\it building blocks} ({\it motifs}), that is, little nets connected to each other. 

The following question was raised and partially answered in \cite{Fri05-1}: 
{\it What is the computational power of networks composed of specific building blocks?}
The answer to this question was pursued in \cite{FriTR08,Fbook}.
As shown in \cite{Fri05-1,FriTR08,Fbook} such research can help the study of the computational power of systems based on multiset rewriting.
Given $S_1$, a formal system based on multiset rewriting, the study of its computational power is normally done by proving that it can be simulated by another formal system, say $S_2$, of known computational power. If $S_2$ can also simulate $S_1$, then we can say that the two systems have equivalent computational power.
There is a new way to analyse the computational power of $S_1$ \cite{Fri05-1}.
This new way depends on how the system stores and manipulates
information and it deduces the computational power of $S_1$. 
%April7: I don't undertstan "in between..."
The way information is stored and manipulated by systems based on multiset rewriting can be easily represented with Petri nets. 
From here then the link between the computational power of formal system based on multiset rewriting and the topological structure of Petri nets.

As indicated in \cite{Fri05-1}, we have not been able to
find in the Petri net literature work that has been done
along the lines of what we  propose.

In the present paper we continue to answer the above question introducing {\em J languages} and proving that they can be accepted by place/transition systems (a model of Petri nets) whose underlying net is composed only of {\em joins} (a kind of building block).
We study how J languages relate to other families of formal languages and show how these relationships allow us to derive the computational power of a model of P systems.

Because of page limit restrictions, several proofs have been omitted.

\section{Basic definitions}
\label{sec:bd}
We assume the reader to have familiarity with basic concepts of formal language theory \cite{HU79}, and in particular with the topic of place/transition systems \cite{RR98,Reis85}.
In this section we recall particular aspects relevant to our presentation.

We denote by $\mN$ the set of natural numbers $\{1, 2, \ldots \}$ while $\nO = \mN \cup \{0\}$.

\begin{definition}
A \index{place/transition system|see{P/T system}}{\em place/transition system} 
(\index{P/T system}{\em P/T system}) is a tuple $$N= (P, T, F,W, K, C_{\mathit{in}})$$ where:
\begin{itemize}
\item[$i)$] $(P, T, F)$ is a {\em net}:
    \begin{enumerate}
    \item $P$ and $T$ are sets with $P \cap T = \emptyset$;
    \item $F \subseteq (P \times T) \cup (T \times P)$;
    \item for every $t \in T$ there exist $p, q \in P$ such that $(p, t), (t, q) \in F$;
    \end{enumerate}
\item[$ii)$] $W: F \rightarrow \mN$ is a \index{function!weight}{\em weight function};
\item[$iii)$] $K: P \ra \mN \cup \{+ \infty\}$ is a \index{function!capacity}{\em capacity function};
\item[$iv)$] $C_{\mathit{in}}: P \rightarrow \nO$ is the \index{P/T system!configuration!initial}{\em initial configuration} (or {\em initial marking}).
\end{itemize}
\end{definition}

We consider P/T systems in which the weight function returns always 1 and the capacity function returns always $+\infty$.
We introduced these functions in the previous definition for consistency with the (for us) standard definition of P/T systems and for consistency with the definition in \cite{Fri05-1,FriTR08,Fbook}.
We follow the very well established notations (places are represented by empty circles, transitions by full rectangle's, tokens by bullets, etc.), concepts and terminology (configuration, input set, output set, sequential configuration graph, etc.) relative to P/T systems \cite{RR98,Reis85}.

In this paper we consider P/T systems as accepting computing devices.
The definition of accepting P/T systems includes the indication of a set $P_{\mathit{in}} \subset P$ of \index{P/T system!input places}{\em input places}, one {\em initial place} $p_{\mathit{init}} \in P\setminus P_{\mathit{in}}$ and one {\em final place} $p_{\mathit{fin}} \in P\setminus P_{\mathit{in}}$.
The places in $P \setminus P_{\mathit{in}}$ are called \index{P/T system!work places}{\em work places}.

An \index{P/T system!accepting}{\em accepting P/T system} $N$ with input $C_{\mathit{in}}$ is 
denoted by 
$$N(C_{\mathit{in}}) = (P, T, F,W, K, P_{\mathit{in}}, p_{\mathit{init}}, p_{\mathit{fin}})$$
where $C_{\mathit{in}}:(P_{\mathit{in}}\cup \{p_{\mathit{init}}\}) \ra \nO,\; C_{\mathit{in}}(p_{\mathit{init}}) = 1$, is the initial configuration of the input places.
So, in the initial configuration some input places can have tokens and the work place $p_{\mathit{init}}$ has one token.
All the remaining places are empty in the initial configuration.
A configuration $C_{\mathit{fin}} \in \mathbb{C}_N$, the set of all reachable configurations of N, is said to be \index{P/T system!accepting!final configuration}{\em final} (or {\em dead state}) if no firing is possible from~$C_{\mathit{fin}}$.

We say that a P/T system  
$N(C_{\mathit{in}}) = (P, T, F, W, K, P_{\mathit{in}}, p_{\mathit{init}}, p_{\mathit{fin}})$ with
$P_{\mathit{in}} =\{p_{in, 1}, \ldots, p_{in, k}\}$,\linebreak
\hbox{$k \in \mN$}, 
\index{P/T system!accepts}{\em accepts} the vector 
$(C_{\mathit{in}}(p_{in, 1}), \ldots, C_{\mathit{in}}(p_{in, k}))$ if in the sequential 
configuration graph of $N(C_{\mathit{in}})$ there is a final configuration 
$C_{\mathit{fin}}$ such that:
\begin{itemize}
\item $C_{\mathit{fin}}(p_{\mathit{fin}}) > 0$;
\item there is at least one path from $C_{\mathit{in}}$ to $C_{\mathit{fin}}$;
\item no other configuration $D$ in the paths from $C_{\mathit{in}}$ to $C_{\mathit{fin}}$ is such that $D(p_{\mathit{fin}}) > 0$.
\end{itemize}

The \index{P/T system!accepting!set of vectors accepted}{\em set of vectors accepted} by $N$ 
is denoted by $\vnum kN$ and it is composed by the vectors 
$$(C_{\mathit{in}}(p_{in, 1}), \ldots, C_{\mathit{in}}(p_{in, k}))$$
accepted by $N$.
The just given definition of (vector) acceptance for P/T systems is new in Petri nets. Normally, the language generated by Petri nets is given by the concatenation of the labels in firing sequences. We discuss this point in Section \ref{sec:fr}. 

As in \cite{FriTR08} we call the nets {\em join} and {\em fork} {\em building blocks}, see Figure \ref{fig:bb}, where the places in each building block are distinct.
\begin{figure}[ht]
\begin{center}
\input{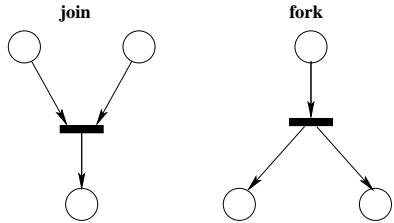}
\end{center}
\caption{Building blocks: {\em join} and {\em fork}.}
\label{fig:bb}
\end{figure}

Also from \cite{FriTR08} we take:
\begin{definition}
Let $x, y \in \{join, fork\}$ be building blocks and let $\bar{t}_x$ and $\hat{t}_y$ be the transitions present in $x$ and $y$ respectively.

We say that {\em y comes after x} (or {\em x is followed by y}, or {\em x comes before y} or {\em x and y are in sequence}) if $\os {\bar{t}_x} \cap {\is {\hat{t}_y}} \neq \emptyset$ and ${\is {\bar{t}_x}} \cap {\is {\hat{t}_y}} = \emptyset$.
We say that {\em x and y are in parallel} if ${\is {\bar{t}_x}} \cap {\is {\hat{t}_y}} \neq \emptyset$ and $\os {\bar{t}_x} \cap {\is {\hat{t}_y}} = \emptyset$.

We say that a net is {\em composed of} building blocks (it is {\em composed of x}) if it can be defined by building blocks (it is defined by {\em x}) sharing places but not transitions.
So, for instance, to say that a net is {\em composed of joins} means that the only building blocks present in the net are {\em join}.
\end{definition}

In this paper we consider accepting P/T systems (in which the weight functions returns always 1 and the capacity function returns always $+\infty$) whose underlying net is composed of joins.
Moreover, if $N = (P, T, F, W, K, P_{\mathit{in}}, p_{\mathit{init}}, p_{\mathit{fin}})$ is such a P/T systems, then for each $t \in T,\; \is t \in (P_{\mathit{in}} \times P\setminus P_{\mathit{in}})$ and $\os t \in P\setminus P_{\mathit{in}}$.
Informally, this means that for each transition $t \in T$ the input set is given by an input place and a work place, while the output set is a work place.
We call these systems {\it J P/T systems}.

It should be clear that J P/T systems are a normal form of accepting P/T systems: for each accepting P/T system there is a J P/T systems accepting the same language.
Such J P/T systems has, eventually, more places and transitions than the original P/T system.
For instance, let us assume that the net depicted in Figure \ref{fig:conv}.a is part of the net underlying an accepting P/T system $N$ with $P$ as set of places, $P_{\mathit{in}} \subset P$ as set input places and $T$ as set of transitions.
The net depicted in Figure \ref{fig:conv}.b belongs to a J P/T system $N_J$ with $P \cup \{w'_1, w'_2\}$ as set of places, $P_{\mathit{in}}$ as set of input places and $T\cup \{t'_1\}$ as set of transitions.
The two nets in Figure \ref{fig:conv} can be regarded as similar in the sets of vectors they accept.
\begin{figure}[ht]
\begin{center}
\input{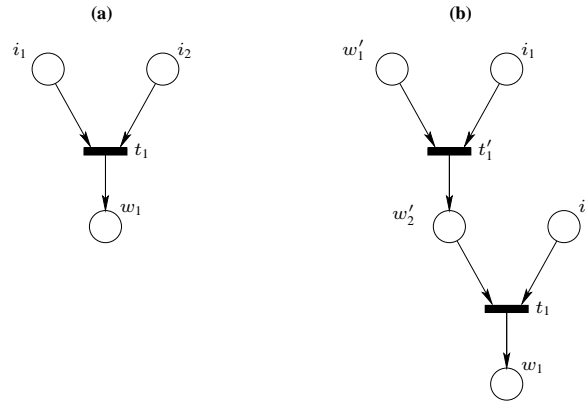}
\end{center}
\caption{{\bf (a)} a net of an accepting P/T system and {\bf (b)} a net of a J P/T system.}
\label{fig:conv}
\end{figure}

\section{J languages and P/T systems}
In this section we prove the main result of the present paper.
In order to do this, we need to introduce a new family of formal languages.

\begin{definition}% [J expressions, J language, length of a J expression]
\label{def:Jlan}
Let $\Sigma$ be an alphabet, then:
\begin{itemize}
\item $\epsilon$ (the {\em empty string}) is a J expression;
\item for each $v \in \Sigma,\ v$ is a J expression;
\item if $\alpha$ and $\beta$ are J expressions, then $(\alpha \cup \beta)$ is a J expression ({\em union}, in this case $\alpha$ and $\beta$ are called {\it union-terms});
\item if $\alpha$ and $\beta$ are J expressions such that $\alpha, \beta \neq \epsilon$ but they can contain $\epsilon$ (e.\,g., $\alpha = a \cup \epsilon$), then $(\alpha\beta)$ ({\em concatenation}), 
and $(\alpha^+)$ ({\em positive closure}) are J expressions;
\item if $\beta_j,\; 1 \leq j \leq k,\; k \in \mN,$ are J expressions such that none of them contains the operator union and the operator positive closure (the reason for this is explained at page \pageref{pag:upc}), then 
$\beta_1^{n_1} \ldots \beta_k^{n_k}$ ({\em exponentiation} in this case $\beta_j$ are called {\em exponentiation-terms}) is a J expression where each $n_j \in \mN$, called {\em exponent}, is either a fixed positive integer or an integer variable (representing all numbers in $\mN$).
We can specify that some of the exponents are equal.
For example, if $k=8$ it can be that $n_1 = n_3 = n_7 = p,\; n_2 = n_6 = q,\; p, q \in \mN$ ($p$ and $q$ are integer variables), $n_4 = n_8 = 5$ and $n_5 = 3$ ($n_4, n_5$ and $n_8$ are fixed positive integers).
In this case we would have $\beta = \beta_1^p\beta_2^q\beta_3^p\beta_4^5\beta_5^3\beta_6^q\beta_7^p\beta_8^5$.
It is important to note that some of the $\beta_j$s can be $\epsilon$.
\end{itemize}

The language defined by a J expression $\alpha$ is a {\em J language} and it is indicated with $\lan \alpha$.
For instance, $\lan {a \cup \epsilon} = \{a, \epsilon\}$ and $\lan {a^p b^3 a^p} = \{a^p b^3 a^p  |  p \geq 1\}$.

If $\alpha$ is a J expression over the alphabet $\Sigma$, then the {\em length} of $\alpha$ is defined as the number of symbols of $\Sigma\cup\{\epsilon\}$ present in $\alpha$. The length of a J expression is indicated with $\|\alpha\|$.
\end{definition}

The reason why we call these languages {\em J} is because this letter is the initial one in {\em join}, the building block composing the nets considered in this paper.

In writing J expressions we can omit many parentheses is we assume that positive 
closure and exponentiation have precedence over concatenation or
union, and that concatenation has precedence over union.
So, for instance, it is possible to write J expressions as 
$\alpha = \epsilon \cup (ab^+\cup b)^+ \cup a^p(bc)^qc^3a^pb^2(cd)^q$.

\begin{remark}
\label{rem:one}
If $\beta$ is an exponentiation with fixed positive integer exponents,
we can construct another exponentiation $\beta'$ such that
$\lan \beta = \lan \beta'$ and $\beta'$ has fixed
positive integer constants that are all 1's.
\end{remark}

The previous remark is clearly true: for each $\beta_k$ exponentiation-term in $\beta$ having $n_k$ as fixed positive exponent, $\beta'$ can be obtained concatenating $n_k$ times~$\beta_k$.
So, for instance, if $\beta = a^p(bc)^qc^3a^pb^2(cd)^q$, then $\beta' = a^p(bc)^qccca^pbb(cd)^q$. 

If $\Sigma$ is a set, then $|\Sigma|$ denotes the {\em cardinality} of $\Sigma$, that is the number of elements in $\Sigma$.
The following follows from Definition \ref{def:Jlan}:
\begin{lemma}
Let $\beta$ be an exponentiation-term. Then:
\begin{itemize}
\item if $\epsilon \in \lan \beta$, then $\lan \beta = \{\epsilon\}$;
\item if $|\lan \beta| > 1$, then $\epsilon \not\in\lan \beta$.
\end{itemize}
\end{lemma}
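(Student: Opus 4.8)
The plan is to unpack the definition of an exponentiation-term and reason directly about which strings it can generate. Recall that by Definition~\ref{def:Jlan}, an exponentiation-term $\beta$ is a J expression containing neither union nor positive closure; thus $\beta$ is built from single symbols of $\Sigma$ and the symbol $\epsilon$ using only concatenation. First I would argue, by structural induction on the way $\beta$ is formed by concatenation, that $\lan\beta$ is a \emph{singleton} set $\{w\}$ for some word $w \in \Sigma^*$: the base cases are $\epsilon$, whose language is $\{\epsilon\}$, and a single letter $v$, whose language is $\{v\}$, each a singleton; and if $\beta = (\alpha\gamma)$ with $\lan\alpha = \{u\}$ and $\lan\gamma = \{x\}$ singletons, then $\lan\beta = \{ux\}$ is again a singleton. (Note that the side condition $\alpha,\gamma \neq \epsilon$ in the concatenation clause does not affect this argument; it only restricts which concatenations are \emph{well-formed} J expressions, not the shape of the languages.)

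Once it is established that $|\lan\beta| = 1$ always, both bullets follow immediately. For the first bullet: if $\epsilon \in \lan\beta$ and $\lan\beta$ is a singleton, then $\lan\beta = \{\epsilon\}$. For the second bullet: the hypothesis $|\lan\beta| > 1$ contradicts $|\lan\beta| = 1$, so it is vacuous — or, read contrapositively, since $\lan\beta$ is always a singleton, it can never contain $\epsilon$ \emph{together with} something else, so if $\epsilon \in \lan\beta$ then $|\lan\beta| = 1$, which is exactly the contrapositive of the second bullet. Thus the second bullet is just the logical converse-restatement of the first under the singleton fact.

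The only subtlety — and the single point that needs care rather than being fully routine — is confirming that an exponentiation-term genuinely excludes the constructs that could create a larger language. Union is the obvious source of multiple words, and it is explicitly forbidden; positive closure is the source of infinitely many words, and it too is explicitly forbidden. Exponentiation itself does not occur \emph{inside} an exponentiation-term (the $\beta_j$ in the exponentiation clause are themselves exponentiation-terms without union and without positive closure, and the exponents are applied at the top level, not within a $\beta_j$), so the only operation available to combine subexpressions of an exponentiation-term is concatenation. Hence the induction above covers all cases, and I expect the write-up to be short: state the singleton claim, give the two-line structural induction, and then read off the two bullets.
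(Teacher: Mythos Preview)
Your proposal is correct. The paper itself gives no proof at all: it simply introduces the lemma with the sentence ``The following follows from Definition~\ref{def:Jlan}'' and leaves it at that. Your structural-induction argument that $\lan\beta$ is a singleton whenever $\beta$ is built only from symbols, $\epsilon$, and concatenation is exactly the unpacking of that one-line claim, and the two bullets then drop out as you describe.

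One small point worth tightening in the write-up: your assertion that ``exponentiation itself does not occur inside an exponentiation-term'' is not something Definition~\ref{def:Jlan} states explicitly---the clause forbids only union and positive closure in the $\beta_j$. Your reading is the one intended by the paper (as is clear from Remark~\ref{rem:one} and every example given), so it is fine to adopt it; but rather than arguing that ``the exponents are applied at the top level,'' it would be cleaner to either (a) state outright that you are taking exponentiation-terms to be words in $\Sigma^*$, which is the paper's evident intent, or (b) note that even if a nested exponentiation were allowed, the same dichotomy holds by induction: $\epsilon$ belongs to the language of an exponentiation only if every base term has $\epsilon$ in its language, which by the inductive hypothesis forces each base language to be $\{\epsilon\}$, hence the whole language is $\{\epsilon\}$. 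Either route closes the one soft spot in your plan.
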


The proof of the following lemma is rather long but not particularly difficult.
The basic idea is to have a J P/T system in which input places are associated to the J expression defining the language accepted by the J P/T system, work places are associated with the possible union, concatenations, positive closure and exponentiations of the J language.
The J P/T system repeatedly ``consumes" (accepts) one token per time from the input places and passes one token from a work place to another.
The J P/T system is non-deterministic (because it ``guesses" to what part of the J expression a token can be matched).

\begin{lemma}
\label{lem:J}
Every J language is accepted by a J P/T system.
\end{lemma}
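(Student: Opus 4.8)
The plan is to proceed by induction on the structure of a J expression $\alpha$ with $\lan{\alpha}=L$, building for every subexpression a \emph{gadget}: a fragment of a net composed of joins, with a designated \emph{entry} work place and \emph{exit} work place, in which every transition has a preset consisting of one input place and one work place, and a single work place as postset, exactly as the definition of a J P/T system demands. The whole system is the gadget for $\alpha$ with its entry renamed $p_{\mathit{init}}$ and its exit renamed $p_{\mathit{fin}}$; the input places are associated with the letter occurrences of $\alpha$ (and, in the exponentiation case, with copy counters of repeated blocks), so that an input configuration records a candidate word together with a proposed way of distributing it over $\alpha$. A run of the system threads a single control token out of $p_{\mathit{init}}$ through the work places, and — since in a J P/T system every transition consumes an input token — each step reads exactly one input token. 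I would also add, for every input place $p_{in,i}$, a trap transition with preset $\{p_{in,i},p_{\mathit{fin}}\}$ and a fresh work place as postset; this makes any dead configuration that marks $p_{\mathit{fin}}$ reachable only once every input token has been consumed. The aim of the induction is then that the control token can come to rest in $p_{\mathit{fin}}$, in a dead configuration with no earlier configuration of that run ever having marked $p_{\mathit{fin}}$, exactly when the tokens read along the run spell a word of $L$.

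The easy cases are the base cases and the regular-expression-like operators. For $\epsilon$ and for a single $v\in\Sigma$ the gadget is a single join that moves the control token forward while reading one input token. For $(\alpha\cup\beta)$ the control token enters the $\alpha$-gadget or the $\beta$-gadget nondeterministically, the two exits being merged. For $(\alpha\beta)$ the two gadgets are placed in sequence, the exit of the first being the entry of the second. For $(\gamma^+)$ the exit of a copy of the $\gamma$-gadget is wired back to its entry, with a nondeterministic ``this was the last iteration'' branch leading to the exit. Subexpressions that may equal $\epsilon$ are given a bypass edge. Because only the outermost gadget's exit is the genuine $p_{\mathit{fin}}$, no configuration before the accepting one ever marks $p_{\mathit{fin}}$, which is what the acceptance condition requires; verifying this clause through nested concatenations and closures is the only mildly delicate point here.

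The main obstacle is the exponentiation $\beta_1^{n_1}\cdots\beta_k^{n_k}$ with the constraint that blocks sharing an integer variable must use one common value. By Remark~\ref{rem:one} I may assume every fixed exponent equals $1$, so fixed-exponent blocks are simply read once each. A variable $p$ shared by two or more blocks is handled by reading those blocks in the order in which they occur but interleaving their copies around a cycle of small gadgets, so that completing one lap of the cycle reads one further copy of every $p$-block; after $m$ laps exactly $m$ copies of each have been read. A lap can be completed only while every $p$-block still has input tokens left, so doing more laps than the scarcest block allows is impossible; and if the blocks' supplies are not all equal, either the control token is stranded part-way through a lap, so $p_{\mathit{fin}}$ is never marked, or a run does reach $p_{\mathit{fin}}$ with input tokens still unread and the global trap transitions then evict the control token — so no accepting run exists. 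A variable occurring in only one block is treated like a bounded positive closure whose input supply must be drained, which the global trap transitions already enforce. Since every transition must consume an input token and P/T nets cannot test a place for zero, all of this bookkeeping — the control token, the lap structure, the ``last lap'' and ``last copy'' guesses — must be routed through genuine reads, and arranging matters so that exactly the right input tokens are available at exactly the right moments, with the control token landing in $p_{\mathit{fin}}$ on its very last read precisely when all the equalities hold, is the technical heart of the argument. With all gadgets in hand, a routine structural induction on $\alpha$ shows that the resulting J P/T system accepts $\lan{\alpha}$.
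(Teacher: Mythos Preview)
Your approach is essentially the paper's: the paper's sketch says to associate input places to the J expression and work places to its union/concatenation/closure/exponentiation substructure, then thread a single token through the work places consuming one input token per step, nondeterministically guessing the parse --- exactly your entry/exit gadget construction by structural induction. Your interleaving treatment of shared exponents is also what the paper has in mind (it makes this explicit only in the converse lemma, calling exponentiation ``a shuffling of concatenations''), and your trap transitions at $p_{\mathit{fin}}$ are a detail the paper omits but that is needed for the dead-state acceptance condition.

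One small slip to fix: you say the gadget for $\epsilon$ is ``a single join that moves the control token forward while reading one input token,'' but in a J P/T system every transition must consume an input token, so $\epsilon$ cannot be handled by a join. The $\epsilon$ base case (and more generally ``bypass edges'' for nullable subexpressions) must be realised by identifying the entry and exit work places rather than by a transition; this interacts with the ``$p_{\mathit{fin}}$ never marked earlier'' clause when $\epsilon$ is in the language, and you should check that your overall wiring still respects it.
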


\label{pag:upc}
Before presenting the next results we explain why exponentiation-terms have to be different than union and positive closure.
Let $\beta = \beta_1^{n_1}\beta_2^{n_2}\beta_3^{n_1}\beta_4^{n_2}$ be an exponentiation with $\beta_1^{n_1}, \beta_2^{n_2}, \beta_3^{n_1}, \beta_4^{n_2}$ exponentiation terms. 
There is no meaning in having (for instance) $\beta_1 = \alpha^+$, where $\alpha$ is a J expression, as $\beta_1^{n_1} = \alpha^{+^{n_1}} = \alpha^+$. 
So, $\beta = \alpha^+\beta_2^{n_2}\beta_3^{n_1}\beta_4^{n_2}$ is the concatenation of $\alpha^+$ to an exponentiation.
A similar argument holds if an exponentiation term contains a positive closure, that is, for instance, $\beta = \alpha\gamma^+$ where $\alpha$ and $\gamma$ are J expressions.

The reason why exponentiation-terms cannot be union depends on the fact that J P/T systems do not have memory.
Let $\beta$ be defined as in the above, let $n_2 > 1$ and let (for example) $\beta_2 = \alpha_1 \cup \alpha_2$, where $\alpha_1$ and $\alpha_2$ are J expressions.
This means that $\beta = \beta_1^{n_1}(\alpha_1 \cup \alpha_2)^{n_2}\beta_3^{n_1}\beta_4^{n_2}$. 
Let us assume that in the initial configuration of the J P/T system accepting $\beta$ there are some tokens in the input places associated to $\alpha_1$ and to $\alpha_2$. 
We know from Lemma \ref{lem:J} that the check of the presence of symbols in $\beta_2$ and $\beta_4$ is done in passages: first checking the occurrence of symbols in $\beta_2$, then the one in $\beta_4$, then (second passage) the one in $\beta_2$ again, and so on.
It can be that (as the J P/T system does not have memory) in the first passage tokens related to $\alpha_1$ are checked, while in the second passage tokens related to $\alpha_2$ are checked.
This would not be a desired behaviour.

The fact that exponentiation-terms cannot be union is not a big limit as we can 
rewrite $\beta$ as 
$$\beta_1^{n_1}\alpha_1^{n_2}\beta_3^{n_1}\beta_4^{n_2} 
\cup \beta_1^{n_1}\alpha_2^{n_2}\beta_3^{n_1}\beta_4^{n_2}.$$

Here a concept that we need in the following:
\begin{definition}
Let $N$ be a J P/T system.
We say that $N$ contains cycles if and only if some firing sequences of $N$ are of the kind $\alpha\beta^n\gamma \in T^*$, where $T$ is the set of transitions of $N$ and $n > 1$.
A {\em cycle} is a cyclic path in the net underlying $N$ having $\beta$ as sequential transitions in a firing sequence.

We denote cycles with the sequence of pairs of places and transitions belonging to it.
The {\em length of a cycle} is the number of transitions present into it.
\end{definition}

Here the converse of the previous lemma:
\begin{lemma}
Every language accepted by a J P/T system is a J language.
\end{lemma}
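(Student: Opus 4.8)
The plan is to read off, from the extremely rigid shape of a J P/T system $N=(P,T,F,W,K,P_{\mathit{in}},p_{\mathit{init}},p_{\mathit{fin}})$, a description of the set of vectors it accepts, and then translate that description into a J expression. Because the underlying net is composed of joins and the input set of every transition consists of one input place and one work place while its output set is a single work place, firing a transition removes one token from a work place, puts one token on a work place, and removes one token from an input place. Hence the number of tokens sitting on work places is an invariant equal to $1$ (initially on $p_{\mathit{init}}$): a computation of $N$ is just a single token walking on the finite directed graph $G$ whose vertices are the work places and whose edges are the transitions, each edge ``coloured'' by the index $i$ of the input place $p_{in,i}$ it consumes. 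A firing sequence is realizable from an initial configuration with $C_{\mathit{in}}(p_{in,i})=c_i$ iff the associated walk starts at $p_{\mathit{init}}$ and, for every $i$, uses at most $c_i$ edges of colour $i$ (the ``at most at every prefix'' condition collapses to ``at most in total'' because the walk linearly orders the firings).

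Next I would pin down which vectors are accepted. Let $J\subseteq\{1,\dots,k\}$ be the set of colours occurring on edges that leave $p_{\mathit{fin}}$. Unwinding the definition of acceptance, $(c_1,\dots,c_k)$ is accepted iff there is a walk $W$ in $G$ from $p_{\mathit{init}}$ to $p_{\mathit{fin}}$ whose only visit to $p_{\mathit{fin}}$ is its final step (this renders the clause that no earlier configuration carries a token on $p_{\mathit{fin}}$; when $p_{\mathit{init}}=p_{\mathit{fin}}$ it forces $W$ to be empty) such that its colour-count vector $\mathbf{m}(W)$ satisfies $m_i(W)=c_i$ for $i\in J$ and $m_i(W)\le c_i$ for $i\notin J$ --- the equalities expressing that the reached configuration is dead, the inequalities that $W$ is realizable. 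The colour strings of such walks form a regular language (it is recognized by $G$ with the edges leaving $p_{\mathit{fin}}$ deleted, each edge emitting its colour), so by Parikh's theorem the set of their colour-count vectors is semilinear; this is also exactly what the cycle decomposition introduced just before the lemma gives, a walk being a simple path with cycles grafted on. Relaxing $m_i\le c_i$ on the coordinates $i\notin J$ merely adds the unit vectors $\mathbf{e}_i$ ($i\notin J$) as further periods, so the accepted set $V\subseteq\mathbb{N}^k$ is still semilinear: $V=\bigcup_{\ell=1}^{n}L_\ell$ with each $L_\ell=\{\mathbf{b}_\ell+\sum_j\lambda_j\mathbf{p}_{\ell,j}:\lambda_j\ge 0\}$ linear.

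It remains to turn each linear $L_\ell$ into a J expression. Writing $v_i$ for the letter that the definition of the accepted language attaches to $p_{in,i}$, the language of $N$ is $\bigcup_{\ell}\{\,v_1^{c_1}\cdots v_k^{c_k}:(c_1,\dots,c_k)\in L_\ell\,\}$; since the union of two J expressions is a J expression, it is enough to treat one linear $L=\{\mathbf{b}+\sum_{t=1}^{s}\lambda_t\mathbf{p}_t:\lambda_t\ge 0\}$, and splitting on which of the periods $\mathbf{p}_t$ are actually used a positive number of times reduces this to the case $\lambda_t\ge 1$ (the empty set of periods contributing the single string $v_1^{b_1}\cdots v_k^{b_k}$). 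For such an $L$ I claim $\{\,v_1^{c_1}\cdots v_k^{c_k}:\mathbf{c}\in L\,\}$ is the language of the exponentiation obtained by concatenating, for $i=1,\dots,k$, the block
\[
  v_i^{\,b_i}\,(v_i^{\,p_{1,i}})^{x_1}\,(v_i^{\,p_{2,i}})^{x_2}\cdots(v_i^{\,p_{s,i}})^{x_s},
\]
where $v_i^{\,m}$ denotes the constant word of $m$ copies of $v_i$, the symbols $x_1,\dots,x_s$ are integer variables with each $x_t$ reused in all $k$ blocks (Definition~\ref{def:Jlan} explicitly allows declaring exponents equal), and the fixed exponents are all $1$ (Remark~\ref{rem:one}). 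Every exponentiation-term above is a power of a single letter, hence free of union and of positive closure, so this is a legal exponentiation; evaluating block $i$ yields $v_i^{\,b_i+\sum_t\lambda_t p_{t,i}}$, which is precisely the language claimed (terms with $p_{t,i}=0$ or $b_i=0$ are $\epsilon$, which is permitted). Taking the finite union over $\ell$ and over the subsets of periods produces a single J expression whose language is the language accepted by $N$.

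The step I expect to need the most care is the acceptance characterization in the second paragraph: being sure that ``the reached configuration is dead'' forces equality exactly on the colours of $J$ and on no others, and that the requirement ``$p_{\mathit{fin}}$ carries a token only in the final configuration'' is captured faithfully by first-passage walks --- including the degenerate situations $p_{\mathit{init}}=p_{\mathit{fin}}$ and $p_{\mathit{fin}}$ having outgoing edges, where the interaction with the dead-state condition must be checked directly. Once $V$ is identified as a semilinear set, the translation into a J expression via exponentiations with shared integer variables is the clean conceptual core of the argument and is essentially forced.
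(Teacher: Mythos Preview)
Your proof is correct and conceptually parallel to the paper's, but executed more systematically. Both arguments hinge on the same two facts: the set of accepted vectors is semilinear, and any semilinear set can be rendered as a J expression. The paper sketches this informally: it distinguishes the no-cycle and cycle cases, speaks of ``border configurations'' and ``added vectors'' (which are precisely the constant vectors and periods of a semilinear decomposition), and argues by contradiction that nothing beyond this can be accepted. You instead make the single-work-token invariant explicit, model a computation as a coloured walk on the work-place graph $G$, obtain semilinearity of the first-passage colour counts via Parikh's theorem (rather than ad hoc cycle bookkeeping), and then write down an explicit J expression---a finite union of exponentiations with shared integer variables---for each linear piece. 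Your last step is in effect a self-contained proof of the paper's later Theorem~\ref{semi-converse}. What your route costs is exactly the item you flagged: the careful acceptance characterization (first passage at $p_{\mathit{fin}}$, equalities on colours in $J$, inequalities elsewhere) needs the degenerate cases checked, and one must fix the reading of the paper's clause ``no other configuration $D$ in the paths'' as an existential condition on a single accepting run. What your route buys is that you never have to argue about what \emph{cannot} be accepted, since regularity of the colour-string language plus Parikh's theorem closes that gap in one stroke.
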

\begin{proof}
We only provide a sketch of the proof a detailed proof would be tedious.
It is very important to recall that:
\begin{itemize}
\item the underlying topological structure of J P/T systems is composed by {\em join} and that for each transition the input set is given by an input place and a work place;
\item the initial configuration sees tokens in input places and in only one work place (the initial place).
\end{itemize}

Let $N$ be a J P/T system and let its input places be associated to symbols in an alphabet $\Sigma$. 
If $N$ contains no cycle, then $N$ accepts concatenations of symbols and unions of symbols and their concatenation.
If instead $N$ contains cycles, then this means that concatenations of symbols can be repeatedly checked.
This means that $N$ can accept the positive closure of symbols, concatenations and their union.

Now we prove that $N$ can accept exponentiations.
Let us assume that $N$ accepts $\beta_1^+\beta_2^+$ with 
$\beta_1 = \beta_{1, 1}\beta_{1, 2} \ldots \beta_{1, k_1}$, 
$\beta_2 = \beta_{2, 1}\beta_{2, 2} \ldots \beta_{2, k_2}$,
$\beta_{1, i}, \beta_{2, j} \in \Sigma^+$, 
$1 \leq i \leq k_1$, $1 \leq j \leq k_2$. 
In order to simplify the proof we assume that $k_1 = k_2$.
With slight modifications the result holds also if $k_1 \neq k_2$.

It is possible to define another J P/T system $N'$ accepting $\beta_{1, 1}^{n_1}\hfill \beta_{1, 2}^{n_2}\hfill \ldots\hfill \beta_{1, k_1}^{n_{k_1}}$\\$\beta_{2, 1}^{n_1}\beta_{2, 2}^{n_2}\ldots\beta_{2, k_1}^{n_{k_1}}$.
The system $N'$ is very similar to $N$.
It is made such that when the last symbol of $\beta_{1, 1}$ is checked, then the first symbols of $\beta_{2, 1}$ is checked.
When the last symbol of $\beta_{2, 1}$ is checked, then the system can either check the first symbol of $\beta_{1, 1}$ or the first symbol of $\beta_{1, 2}$ and so on.
The same result holds if either $\beta_1$ or $\beta_2$ is not a positive closure (but just a concatenation).
Informally: for J P/T systems exponentiation is a shuffling of concatenations.

Now we prove that nothing else can be accepted by J P/T systems.
By contradiction, let us assume that there is a set of vectors accepted by a J P/T system having $P_{\mathit{in}}$ as set of initial places such that it cannot be represented by a J expression over $P_{\mathit{in}}$.
Clearly, the set of vectors has to have an infinite number of elementents.
If not, then a J expression given by the union of the concatenations of the different elements in each of the finite number of vectors would represent this set.

As the number of places and transitions is finite, then the number of cycles in the J P/T system is finite, too.
Depending on the number and the length of the cycles present in the J P/T system, there is a finite set of accepted initial configurations (called {\em border configuration}) such that for each of them there are vectors (called {\em added vector}) such that the (vector) sum of one border configuration to any multiple of any of its added vector leads to an accepted initial configuration.
Informally, the acceptance of any border configuration needs some cycles to be traversed. Given a border configuration, its added vectors allow these cycles to be traversed other times. 
But then, there is a J expression that can represent the set of vectors accepted by the J P/T system.
This J expression is given by the union of J expressions representing border configurations where each place is concatenated with the respective place in the added vectors to the power of an integer variable.
A contradiction.

For instance, let $P_{\mathit{in}} = \{p_1, p_2\}$, $(4, 6)$ be a border configuration, and let $(2, 0)$ and $(1, 3)$ be added vectors for the border configuration.
The J expression is then: $p_1^4(p_1p_1)^{k_1}p_2^6 \cup p_1^4p_1^{k_2}p_2^6(p_2p_2p_2)^{k_3}$ where $k_1, k_2, k_3 \in \mN$ are integer variables.%
\end{proof}

From the previous two lemmas we have:
\begin{theorem}
\label{th:xpt}
A language is a J language if and only if it is accepted by a J P/T system.
\end{theorem}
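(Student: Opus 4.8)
The plan is to read off the theorem as the conjunction of the two lemmas just proved, since each of them supplies exactly one of the two inclusions that make up the biconditional.

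For the ``only if'' direction I would simply invoke Lemma~\ref{lem:J}: given a J language $L=\lan{\alpha}$ defined by a J expression $\alpha$, that lemma constructs a J P/T system whose accepted set of vectors, under the identification of input places with the alphabet symbols occurring in $\alpha$, is precisely the vector encoding of $L$; hence $L$ is accepted by a J P/T system. For the ``if'' direction I would invoke the converse lemma: any language accepted by a J P/T system $N$ (whose underlying net is composed only of joins, each transition consuming from one input place and one work place) is a J language, the defining J expression being extracted from the structure of $N$ — acyclic portions contributing concatenations and unions, cycles contributing positive closures, cycles that share work places contributing exponentiations with shared exponents, and the finitely many border configurations of $N$ together with their added vectors accounting for the outermost unions and the integer variables appearing as exponents. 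Combining the two inclusions gives the equivalence.

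There is essentially no obstacle at this final stage: the deduction is a single line once both lemmas are in hand. All the difficulty is already absorbed into the two lemmas, and in particular into the converse one, whose sketch above rests on the (there only sketched) fact that the accepted vector sets of J P/T systems coincide with those describable by J expressions over the input alphabet — the point at which the semilinear ``border configuration plus added vectors'' decomposition of the accepted set must be matched against the syntax of exponentiation, including the convention that exponentiation-terms are neither unions nor positive closures. Were one to aim for a fully self-contained argument, that matching is where the real work would lie; granting the two lemmas, the theorem is immediate.
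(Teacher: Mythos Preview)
Your proposal is correct and matches the paper's own proof exactly: the paper derives Theorem~\ref{th:xpt} in a single line from the two preceding lemmas (Lemma~\ref{lem:J} for the forward direction and its converse for the backward direction), with all substantive work already absorbed into those lemmas. Your additional commentary on where the real difficulty lies is accurate but goes beyond what the paper itself supplies at this point.
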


\section{Semilinearity of J languages}
In this section, we show that the Parikh map of every J languages
is semilinear.  
We also prove a ``converse'' (this is made more precise later)
of this result.

Let $N$ be the set of non-negative integers and $n$ be a positive
integer.  A subset $S$ of $N^n$ is a {\em linear set} if there
exist vectors $v_0,v_1, \dots , v_t$ in $N^n$ such that
$$S = \{ v \mid v = v_0 + i_1v_1 + \cdots+ i_tv_t, \
i_j \in N\}.$$
The vectors $v_0$ (referred to as the {\em constant vector}) and
$v_1, v_2, \dots , v_t$ (referred to as the {\em periods}) are called
the {\em generators} of the linear set $S$.
The set $S \subseteq N^n$ is {\em semilinear} if
it is a finite union of linear sets.

The empty set is a trivial (semi)linear set, where the set of
generators is empty.
Every finite subset of $N^n$ is
semilinear -- it is a finite union of linear sets whose generators
are constant vectors. It is also clear that the semilinear sets
are closed under (finite) union.

Let $\Sigma = \{a_1, a_2, \dots , a_n\}$ be an alphabet.
For each word $w$ in $\Sigma^*$, define the
Parikh map of $w$ to be
$$\psi(w) = (|w|_{a_1}, |w|_{a_2}, \dots , |w|_{a_n}).$$

\noindent
where $|w|_{a_i}$ denotes the number of occurrences of symbol $a_i$ in $w$.
For a language $L \subseteq \Sigma^*$, the Parikh map of $L$ is
$\psi(L) = \{ \psi(w) \mid w \in L\}$.
The language $L$ is semilinear if $\psi(L)$ is a semilinear set.

There is a simple automata characterisation of semilinear sets.
Let $M$ be a non-deterministic finite automaton
{\em without an input tape}, but with $n$ counters
(for some $n \ge 1$). The computation of $M$ starts
with all the counters zero and the automaton in the start state.
An atomic move of $M$ consists of incrementing at most one
counter by 1 and changing the state (decrements are not allowed).
An $n$-tuple $v =(i_1, \dots,i_n) \in N^n$ is generated by $M$ if $M$,
when started from its initial configuration, halts with
$v$ as the contents of the counters.
The set of all $n$-tuples generated by $M$ is denoted by $G(M)$.
We call this automaton a {\em finite-state generator}.

The following result was shown in \cite{HIKS02}:

\begin{theorem} \label{gen}
Let $n \geq 1$.
A subset $S \subseteq N^n$ is semilinear
if and only if it can be generated
by a finite-state generator with $n$ counters.
\end{theorem}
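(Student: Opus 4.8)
The plan is to prove the two implications of the equivalence separately, and for the harder direction to reduce to the (easy, regular) case of Parikh's theorem.

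\textbf{Semilinear $\Rightarrow$ finite-state generator.} A semilinear set is a finite union of linear sets, and a finite-state generator may begin with a nondeterministic branch selecting one of finitely many otherwise-disjoint sub-machines; so it suffices to realise a single linear set $S = \{v_0 + i_1v_1 + \cdots + i_tv_t \mid i_j \in N\}$. I would build $M$ around a distinguished hub state $q$. From the start state, run a fixed chain of states that deposits $v_0$ onto the counters: since an atomic move increments at most one counter, depositing a vector $u$ costs as many moves, through as many fresh states, as the sum of the components of $u$; this chain ends at $q$. At $q$, $M$ nondeterministically either halts, or selects an index $j \in \{1,\dots,t\}$, runs once around a cycle of fresh states that deposits $v_j$ one increment at a time, and returns to $q$. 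The possible halting counter contents are then exactly the vectors $v_0 + \sum_j i_j v_j$, i.e.\ $G(M) = S$.

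\textbf{Finite-state generator $\Rightarrow$ semilinear.} Given a generator $M$ with state set $Q$, turn it into a finite automaton $A$ over $\Sigma = \{a_1,\dots,a_n\}$: a transition of $M$ that increments counter $i$ becomes an $a_i$-labelled edge, a transition incrementing no counter becomes an $\epsilon$-edge; the initial state of $A$ is that of $M$, and the accepting states of $A$ are those from which $M$ cannot move. By construction, $v = (i_1,\dots,i_n)\in G(M)$ iff $A$ has an accepting path reading some word $w$ with $\psi(w)=v$, so $G(M)=\psi(L(A))$. It remains to show that the Parikh image of a regular language is semilinear, which I would do by structural induction on a regular expression for $L(A)$, writing $\Lambda(c;P)$ for the linear set with constant $c$ and period set $P$, and $e_i$ for the $i$-th unit vector. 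The bases $\emptyset,\{\epsilon\},\{a_i\}$ give $\emptyset,\ \Lambda(0;\emptyset),\ \Lambda(e_i;\emptyset)$. Union uses closure of semilinear sets under finite union. For concatenation, if $\psi(L_1)=\bigcup_p\Lambda(u_p;U_p)$ and $\psi(L_2)=\bigcup_q\Lambda(w_q;W_q)$ then $\psi(L_1L_2)=\bigcup_{p,q}\Lambda(u_p+w_q;\,U_p\cup W_q)$. For positive closure (with $L^* = \{\epsilon\}\cup L^+$), starting from $\psi(L_1)=\bigcup_{p=1}^m\Lambda(u_p;U_p)$ one checks
$$\psi(L_1^+)=\bigcup_{\emptyset\neq J\subseteq\{1,\dots,m\}}\Lambda\Big(\textstyle\sum_{p\in J}u_p;\ \{u_p\mid p\in J\}\cup\textstyle\bigcup_{p\in J}U_p\Big).$$

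\textbf{The main obstacle.} The positive-closure case is the only delicate point; everything else is mechanical. The inclusion $\supseteq$ is routine: take one copy of each $u_p$ with $p\in J$, then absorb any further copies of the $u_p$'s and all needed $U_p$-periods onto these chosen elements. For $\subseteq$, take a finite sum of elements drawn with repetition from the pieces $\Lambda(u_p;U_p)$, let $J$ be the set of indices actually used, and observe that a piece used with multiplicity $c_p\ge 1$ contributes $u_p$ once to the constant part and $(c_p-1)u_p$, a multiple of the period $u_p$, to the periodic part, while all remaining period contributions lie in $\bigcup_{p\in J}U_p$ and pieces outside $J$ contribute nothing. This bookkeeping over the subsets $J$ is where care is needed. (Alternatively, since regular languages are context-free, this direction follows at once from Parikh's theorem.)
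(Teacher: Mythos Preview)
Your argument is correct. Note, however, that the paper does not actually prove this theorem: it simply quotes it from \cite{HIKS02} (``The following result was shown in \cite{HIKS02}''), so there is no in-paper proof to compare against. What you have written is a self-contained proof where the paper gives only a citation.

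Two small remarks on the write-up. First, in the forward direction you say that at the hub state $q$ the machine ``nondeterministically either halts, or selects an index $j$''. Under the paper's wording, a generator \emph{halts} when no move is available, so $q$ itself cannot be a halting state once you hang the period-cycles on it; you should add a single $\epsilon$-move from $q$ to a fresh sink state with no outgoing transitions and let halting occur there. This is cosmetic but worth making explicit. Second, in the backward direction your choice of accepting states of $A$ (``those from which $M$ cannot move'') matches that same reading of ``halts''; if one instead reads the definition with designated final states, your reduction still goes through verbatim with the final states of $M$ taken as the accepting states of $A$.

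The only place that requires genuine care is your positive-closure identity, and the bookkeeping you outline (record the set $J$ of linear pieces actually used, peel off one copy of each $u_p$ into the constant, and push the remaining $(c_p-1)$ copies of $u_p$ together with all used $U_p$-periods into the period part) is exactly right in both directions. Your parenthetical shortcut via Parikh's theorem is also perfectly acceptable and is in fact the route most readers would expect.
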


Using Theorem \ref{gen}, we can then prove the following result.

\begin{theorem} \label{semi}
The Parikh map of every language denoted by a J expression is semilinear.
\end{theorem}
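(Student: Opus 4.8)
The plan is to induct on the structure of a J expression $\alpha$, showing at each step that $\psi(\lan\alpha)$ is semilinear, and to exploit Theorem~\ref{gen} by building a finite-state generator for the harder compound cases rather than manipulating linear sets directly. The base cases are immediate: $\psi(\lan\epsilon)=\{(0,\dots,0)\}$ and $\psi(\lan v)$ for $v\in\Sigma$ is a single constant vector, both finite hence semilinear. For union, $\lan{\alpha\cup\beta}=\lan\alpha\cup\lan\beta$, so $\psi(\lan{\alpha\cup\beta})=\psi(\lan\alpha)\cup\psi(\lan\beta)$, and semilinear sets are closed under finite union. For concatenation, $\psi(\lan{\alpha\beta})=\psi(\lan\alpha)+\psi(\lan\beta)$ (pointwise Minkowski sum of the two sets), and for positive closure $\psi(\lan{\alpha^+})=\bigcup_{m\ge1}\,m\cdot\psi(\lan\alpha)$; both operations preserve semilinearity, which one can see most cleanly by composing finite-state generators (run the generator for $\alpha$, then for $\beta$; or loop back to the start of the generator for $\alpha$, guessing nondeterministically when to stop) and invoking Theorem~\ref{gen}.

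The one genuinely new case is exponentiation. Here the issue is that an expression such as $\beta_1^{n_1}\beta_2^{n_2}\beta_3^{n_1}\beta_4^{n_2}$ forces the exponent of $\beta_1$ to equal that of $\beta_3$, and of $\beta_2$ to equal that of $\beta_4$, so the Parikh map is not simply a product but carries linear constraints linking blocks. The clean way to handle this is to build a finite-state generator directly. By Remark~\ref{rem:one} we may assume all fixed integer exponents equal $1$, so the only exponents that matter are the integer variables; say they are $p_1,\dots,p_r$. Note also that by the restriction in Definition~\ref{def:Jlan} each exponentiation-term $\beta_j$ contains neither union nor positive closure, so $\lan{\beta_j}$ is a single word $w_j$ and $\psi(\lan{\beta_j})=\{\psi(w_j)\}$ is a single vector $u_j$. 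Then $\psi(\lan{\beta_1^{n_1}\cdots\beta_k^{n_k}})=\makeset{\sum_{j=1}^k n_j\,u_j}{\text{each variable }p_i\in\mN,\ n_j \text{ the variable or }1}$, which is exactly a linear set: its constant vector is $\sum_{j:\,n_j=1}u_j$ plus one copy of $\sum_{j:\,n_j=p_i}u_j$ for each variable $i$ (since variables range over $\mN=\{1,2,\dots\}$, not $\{0,1,\dots\}$), and its periods are the vectors $\sum_{j:\,n_j=p_i}u_j$ for $i=1,\dots,r$. Equivalently, a finite-state generator reads off the word for the constant part, then has $r$ optional loops, one per variable, each adding $\sum_{j:\,n_j=p_i}u_j$ to the counters; by Theorem~\ref{gen} the generated set is semilinear.

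Finally I would assemble these pieces: since every J expression is built from the base cases by union, concatenation, positive closure, and exponentiation, and each of these constructions has been shown to preserve semilinearity of the Parikh map, a straightforward structural induction gives that $\psi(\lan\alpha)$ is semilinear for every J expression $\alpha$. The main obstacle is bookkeeping in the exponentiation case --- making precise which $\beta_j$'s share a variable and checking that the resulting generator (or explicit generators of a linear set) really produces exactly $\psi(\lan\alpha)$, in particular getting the constant vector right given that the integer variables range over $\mN$ and not over $\nO$. Everything else is routine closure reasoning, so I would keep those steps brief and spend the bulk of the write-up on exponentiation.
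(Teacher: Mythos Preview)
Your approach---structural induction on the J expression, handling exponentiation via Theorem~\ref{gen} by constructing a finite-state generator---is exactly the route the paper signals (the proof is omitted in the paper, which says only that it proceeds ``Using Theorem~\ref{gen}''). The base cases and the closure arguments for union, concatenation and positive closure are all fine.

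There is one small but genuine gap in the exponentiation case. You claim that because an exponentiation-term $\beta_j$ contains neither union nor positive closure, $\lan{\beta_j}$ is a single word. Definition~\ref{def:Jlan} does not license that: it forbids union and positive closure inside $\beta_j$, but it does \emph{not} forbid nested exponentiation (or a concatenation containing one), so $\beta_j$ may itself carry integer variables and denote an infinite language. The paper's own lemma immediately following Remark~\ref{rem:one} confirms this, explicitly treating the case $|\lan\beta|>1$ for an exponentiation-term~$\beta$. The repair is easy and stays inside your framework: instead of a fixed vector $u_j$, invoke the inductive hypothesis to obtain a finite-state generator $M_j$ for $\psi(\lan{\beta_j})$, and in your generator for the full exponentiation replace each ``add $u_j$'' step by ``run a copy of $M_j$''. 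The loop structure you describe (one loop per integer variable, each iteration cycling through the $\beta_j$'s sharing that variable, plus a straight-line pass for the fixed exponents) then goes through unchanged, and Theorem~\ref{gen} finishes.
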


\noindent
For the ``converse'' of Theorem \ref{semi}, we need the following definition.

\begin{definition}
\label{def:Lset}
Let $S \subseteq N^n$ and
$\Sigma = \{a_1, \dots, a_n\}$.
Define the language 
$$L_S = \{a_1^{s_1} a_2^{s_2} \cdots a_n^{s_n} \mid (s_1, \dots, s_n) \in S \}.$$
\end{definition}

\begin{theorem} \label{semi-converse}
If $S$ is a semilinear set, then $L_S$ is a J language.
\end{theorem}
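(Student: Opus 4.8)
The plan is to decompose an arbitrary semilinear set $S \subseteq N^n$ into a finite union of linear sets and show that each yields a J language; since J expressions are closed under union (the $\cup$ operator), it then suffices to handle a single linear set. So write $S = \bigcup_{m} S^{(m)}$ with each $S^{(m)}$ linear, and note $L_S = \bigcup_m L_{S^{(m)}}$, so I only need a J expression for $L_{S^{(m)}}$. Fix one linear set $S' = \{ v_0 + i_1 v_1 + \cdots + i_t v_t \mid i_j \in N \}$ with generators $v_0, v_1, \dots, v_t \in N^n$.

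Next I would write out the target language explicitly. If $v_0 = (c_1,\dots,c_n)$ and $v_\ell = (d_{\ell,1},\dots,d_{\ell,n})$ for $1 \le \ell \le t$, then a word of $L_{S'}$ has the form $a_1^{s_1}\cdots a_n^{s_n}$ where $s_k = c_k + \sum_{\ell=1}^{t} i_\ell d_{\ell,k}$. The idea is to realise this with exponentiation-terms and exponents that are shared integer variables: one integer variable $p_\ell$ for each period $v_\ell$. For each coordinate $k$, the block contributing the $a_k$'s should expand to $a_k^{c_k}$ concatenated with $t$ exponentiation-terms $(a_k^{\,d_{\ell,k}})$ raised to exponents that are the variables $p_\ell$ — except that exponents range over $\mN = \{1,2,\dots\}$, not over $N$ including $0$, which forces a small adjustment. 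I would handle the "$i_\ell = 0$" case by taking a finite union: for each subset $J \subseteq \{1,\dots,t\}$ of period-indices that are "switched off", include a J expression in which only the periods in $\{1,\dots,t\}\setminus J$ appear, each with a variable exponent in $\mN$, and the contribution of the off-periods is absorbed into the constant part. This is a finite union over the $2^t$ subsets, so it stays within J expressions. (Alternatively, one can replace each period $v_\ell$ by "$v_\ell$ used one or more times, or zero times" via the same union trick.) Also, fixed exponents like $c_k$ and $d_{\ell,k}$ are legal (an exponent may be a fixed positive integer), and $c_k = 0$ or $d_{\ell,k}=0$ is allowed because exponentiation-terms may be $\epsilon$ — here the relevant term is $a_k^{0}$, which I would just interpret as the empty string / omit it, matching the remark that some $\beta_j$ can be $\epsilon$.

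The main thing to check carefully is that the resulting expression is a \emph{legal} J expression: exponentiation-terms may not contain union or positive closure, and here each exponentiation-term is just a power of a single letter $a_k$, so that constraint is satisfied; the requirement that we may declare some exponents equal is exactly what lets the same variable $p_\ell$ appear in the $a_1$-block, the $a_2$-block, \dots, the $a_n$-block simultaneously, which is precisely what couples the coordinates correctly. I expect the one real subtlety — the main obstacle — to be the mismatch between $N$ (which contains $0$) in the definition of semilinear sets and $\mN$ (strictly positive) as the range of J-expression exponent variables; the $2^t$-fold union over which periods are active is the clean way around it, and I would make sure the bookkeeping (absorbing inactive periods into the constant vector of that union-term) is stated precisely. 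Once the expression is exhibited, verifying $\lan{\alpha} = L_S$ is a routine unwinding of Definition~\ref{def:Jlan}: a choice of values in $\mN$ for the variables $p_\ell$ in the $J$-th union-term corresponds to the point $v_0 + \sum_{\ell \notin J}(\text{that value})\,v_\ell$ of $S'$, and conversely every point of $S'$ arises this way for exactly one choice of $J$ (the set of indices with $i_\ell = 0$).
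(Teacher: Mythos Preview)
The paper omits the proof of this theorem entirely (as announced, ``Because of page limit restrictions, several proofs have been omitted''), so there is no author's argument to compare against. Judged on its own merits your proposal is correct and is the natural direct construction: reduce to a single linear set by union, then encode the constant vector and the periods as one big exponentiation whose variable exponents $p_\ell$ are shared across the $n$ letter-blocks, handling the $i_\ell=0$ case by the $2^t$-fold union over ``active'' period subsets. All the constraints of Definition~\ref{def:Jlan} are met, since every exponentiation-term you use is either $\epsilon$ or a concatenation of copies of a single letter (no union, no positive closure), and the definition explicitly permits declaring several exponents equal and allowing some $\beta_j=\epsilon$.

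Two minor points of polish. First, when $c_k=0$ you cannot write the fixed exponent $0$ (exponents lie in $\mN$); you already note this, but say explicitly that you replace $(a_k)^0$ by the exponentiation-term $\epsilon$ with exponent~$1$ (allowed by the last clause of the exponentiation rule), or simply drop the term. Second, your closing sentence that ``every point of $S'$ arises this way for exactly one choice of $J$'' is a statement about the coefficient tuple $(i_1,\dots,i_t)$, not about the point of $S'$ itself (distinct coefficient tuples may yield the same point if the periods are dependent); the argument only needs surjectivity of the union onto $L_{S'}$, which you have, so this is harmless but worth phrasing carefully.
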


\section{Complexity of J Languages}

Here, we briefly discuss the (TM) space complexity of J languages.
We will show that every J language can be accepted by a
non-deterministic Turing machine (NTM)
with a one-way read-only input and
a $log~n$ space-bounded read-write work-tape.
Actually, what we show is that the language can be accepted by
a one-way non-deterministic finite automaton augmented with a finite
number of counters.  In each computing step each counter can be incremented/decremented by
1 and tested for zero.  The counters start with zero value,
and we assume (without loss of generality) that
the machine accepts when in the final state and when all counters store zero.
During the computation,
the (non-negative) integer value in each counter never
exceeds the length of the one-way read-only input.
We call this machine a linear-space multicounter machine,
or simply, LCM.  Clearly, an LCM can be simulated by
a one-way $log~n$ space-bounded NTM, since the values in the
counters can be stored and managed on a $log~n$ read-write work-tape.

The next two results can be shown.

\begin{theorem}
Every J language can be accepted by an LCM.
\end{theorem}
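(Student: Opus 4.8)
Given a J language $L$, fix a J expression $\alpha$ with $\lan\alpha = L$; the plan is to build an LCM $M_\alpha$ accepting $L$ by structural induction on $\alpha$. Throughout I adopt the (stated) convention that an LCM accepts exactly when it is in a final state, has consumed all of its input, and all of its counters are zero, and I maintain the invariant that along every computation on every input $w$ each counter of $M_\alpha$ always holds a value bounded by the number of input symbols consumed so far (hence by $|w|$). The base cases are immediate: for $\alpha = \epsilon$ a one-state counter-free LCM whose state is both initial and final; for $\alpha = v \in \Sigma$ the obvious two-state counter-free LCM reading $v$. Closure under union, concatenation and positive closure is obtained by the standard inductive constructions used to turn a regular expression into an NFA, carried out with moves that read no symbol ($\varepsilon$-moves) and with the counter sets of the sub-machines kept disjoint: for $(\alpha \cup \beta)$ a fresh initial state $\varepsilon$-branches into the initial states of $M_\alpha$ and $M_\beta$; for $(\alpha\beta)$ an $\varepsilon$-move runs from each accepting state of $M_\alpha$ to the initial state of $M_\beta$ (legitimate since at such a state all of $M_\alpha$'s counters are zero, so $M_\beta$ starts clean on the remaining suffix); for $(\alpha^+)$ an $\varepsilon$-move runs from each accepting state of $M_\alpha$ back to its initial state. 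In each case a sub-machine's counters are zero whenever it is not running and otherwise only count symbols it has consumed, so the invariant is inherited.

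The substantive case is exponentiation $\alpha = \beta_1^{n_1}\cdots\beta_k^{n_k}$, where no $\beta_j$ contains union or positive closure. By Remark~\ref{rem:one} I first replace every fixed-integer exponent by $1$ (writing out the repeated copies), so that every surviving exponent is an integer variable; let $m_1,\dots,m_r$ be the distinct such variables and, for each block $j$, let $\sigma(j)$ be the index of the variable attached to $\beta_j$. Next I delete every block $\beta_j$ with $\lan{\beta_j} = \{\epsilon\}$: such a block contributes only $\epsilon$ to every word whatever its exponent, so the deletion changes neither $L$ nor any constraint that an actual word can witness, and it ensures that every remaining block consumes at least one symbol per copy. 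Suppose first that each $\beta_j$ is a fixed nonempty string $u_j$. The LCM $M_\alpha$ scans the blocks $j = 1, \dots, k$ left to right; for each variable $m_i$ it reserves two counters $c_i^0, c_i^1$ used in ``ping-pong'' fashion, keeping in its finite control which of the two (if either) currently holds the value of $m_i$. On reaching block $j$, put $i = \sigma(j)$: if $m_i$ has not yet been used, $M_\alpha$ reads copies of $u_j$ one at a time, incrementing $c_i^0$ after each complete copy, stopping nondeterministically after at least one copy, and records that $c_i^0$ now holds $m_i$; if $m_i$ already holds a value in a counter $c$ (its partner $c'$ being empty), $M_\alpha$ reads copies of $u_j$ one at a time, after each complete copy decrementing $c$ and incrementing $c'$, and must finish exactly when $c$ hits $0$ — a mismatch with $u_j$, or reaching end-of-input with $c > 0$, kills the branch — after which $c'$ holds $m_i$. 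A branch that matches all $k$ blocks and then finds no input remaining decrements every counter to $0$ (consuming no further input) and enters a final state. One checks that this machine accepts exactly $\lan\alpha$, uses the constant number $2r$ of counters, and that every $c_i^b$ only ever holds a count of copies of some $u_j$ read within a single block; since each $u_j$ is nonempty that count is at most the number of symbols consumed so far, so the invariant holds.

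If some block $\beta_j$ is not a fixed string it is still free of union and positive closure, hence a concatenation of symbols and nested exponentiations; ``reading one copy of $u_j$'' is then replaced by running the corresponding finite-control-with-counters routine for $\beta_j$, with the added requirement that the variable exponents internal to $\beta_j$ take one and the same value across all $m_{\sigma(j)}$ copies of $\beta_j$ — enforced by the same ping-pong device one level down. This is notationally heavier but adds only finitely many further counters, each again counting only copies of nonempty sub-patterns, so the linear bound survives and the correctness argument is unchanged.

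The only real obstacle is the exponentiation case, and within it two points: (i) deleting the $\epsilon$-valued exponentiation-terms — without this a variable exponent attached to $\epsilon$ could drive a counter arbitrarily high while no input is read, destroying the linear-space bound; and (ii) realising a repeated exponent variable, which must force equal values in several blocks, using only the increment/decrement/test-for-zero repertoire of an LCM — this is exactly what forces a pair of counters per variable, used cyclically. Everything else is routine automaton construction.
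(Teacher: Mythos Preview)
The paper does not actually prove this theorem: after defining LCMs it writes ``The next two results can be shown'' and then merely states the theorem and its corollary, the proof having been omitted for space. So there is nothing in the paper to compare against directly.

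Your argument by structural induction on the J expression is the natural approach and is essentially correct. The regular-expression-style closure constructions for $\cup$, concatenation and $^+$ are routine; the substantive case is exponentiation, and your ``ping-pong'' pair of counters per integer variable is exactly the standard device for enforcing equal counts with increment/decrement/zero-test counters while keeping a value alive for later reuse. Your preliminary step of deleting $\epsilon$-valued exponentiation-terms is the right observation needed to keep every counter bounded by the input consumed so far, and the final clean-up phase that drains the live counters to zero is harmless for the linear bound.

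One point to revisit. In the nested case you enforce that the variable exponents \emph{internal} to $\beta_j$ take the same value across all $m_{\sigma(j)}$ copies of $\beta_j$. The paper never spells out the semantics of $\beta^{n}$ when $|\lan\beta|>1$, but the most natural reading --- $n$-fold language concatenation of $\lan\beta$ with itself --- would let each outer copy choose its own inner exponents. Under that reading your machine accepts only a proper subset. The repair is trivial (reset the inner-level counters between outer copies rather than ping-ponging them across copies), so this is not a structural gap; but you should state which semantics you are taking and match the construction to it.
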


\begin{corollary}
Every J language can be accepted by a one-way $log~n$
space-bounded NTM.
\end{corollary}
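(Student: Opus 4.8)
The plan is to obtain the corollary immediately from the preceding theorem together with the remark already made above, namely that a linear-space multicounter machine (LCM) is simulated with no essential overhead by a one-way $\log n$ space-bounded NTM; it only remains to spell out this simulation.

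Let $L$ be a J language. By the previous theorem, $L = L(M)$ for some LCM $M$ with a fixed number $c$ of counters, each of which stays, throughout any accepting computation, below the length $n$ of the one-way read-only input. I would construct an NTM $M'$ sharing that same one-way read-only input tape and equipped with a single read-write work-tape. On the work-tape $M'$ keeps $c$ fields, separated by a delimiter symbol, the $i$-th field holding the current value of the $i$-th counter of $M$ in binary; since that value is at most $n$, each field occupies at most $\lceil \log_2 n \rceil + 1$ cells, so the whole work-tape uses $O(c\log n) = O(\log n)$ space because $c$ is constant. The state of $M$ is carried in the finite control of $M'$.

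An atomic move of $M$ --- optionally consume the current input symbol, change state, and (after possibly testing a counter for zero) increment or decrement at most one counter by $1$ --- is simulated step for step by $M'$: a zero test scans the relevant binary field for a nonzero digit; an increment or decrement by $1$ is the in-place binary successor/predecessor operation on that field, with an occasional one-cell shift of the field handled by a bounded local rewriting of the tape. Non-deterministic branchings of $M$ become non-deterministic branchings of $M'$, and, $M$ being one-way on its input, so is $M'$. Finally $M'$ accepts exactly when $M$ does, i.e.\ upon reaching $M$'s final state with all counters zero, which $M'$ verifies by checking that every binary field is all zeros; hence $L(M') = L$ and $M'$ is a one-way $\log n$ space-bounded NTM. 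The only mildly delicate point --- keeping the binary fields correctly delimited under length-changing increments and decrements --- is entirely routine and costs only $O(\log n)$ space, so no real obstacle arises.
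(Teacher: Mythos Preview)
Your proposal is correct and follows exactly the approach the paper takes: the corollary is derived from the preceding theorem by the observation (stated just before the theorem) that an LCM can be simulated by a one-way $\log n$ space-bounded NTM because the counter values, being bounded by $n$, fit in $O(\log n)$ cells. You have simply spelled out that simulation in more detail than the paper does.
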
 

It is well-known and, actually easily shown, that
$L = \{x\#x^R ~|~ x \in \{0,1\}^+ \}$
($R$ denotes reverse) cannot be accepted by a one-way $log~n$
space-bounded NTM, hence, cannot be accepted by an LCM.
(For an input $x\#x^R$ of length $2n+1$, a one-way NTM
with $log~n$ space can only differentiate a linear
number of strings of $x$'s before the symbol $\#$.
But there are $2^n$ different $x$'s.)

\begin{corollary}
There are context-free languages that are not J languages.
\end{corollary}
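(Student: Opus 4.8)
The plan is to produce a single context-free language that violates the necessary condition for J languages established in the Corollary just above. I would take
$$L = \{x \# x^R \mid x \in \{0,1\}^+\},$$
the language already highlighted in the discussion preceding this corollary, and argue in three short steps.

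First, I would verify that $L$ is context-free. This is immediate: $L$ is generated by the context-free grammar with productions $S \to 0\,S\,0 \mid 1\,S\,1 \mid 0\,\#\,0 \mid 1\,\#\,1$, since a leftmost derivation simply builds the prefix $x$ symbol by symbol on the left while mirroring it on the right, terminating by planting the marker $\#$ between the last symbol of $x$ and the first symbol of $x^R$. Equivalently, a pushdown automaton pushes the portion of the input before $\#$ onto its stack and then pops it while matching against the portion after $\#$.

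Second, I would invoke the fact recorded immediately after the previous Corollary: $L$ cannot be accepted by a one-way $\log n$ space-bounded NTM, because on an input $x \# x^R$ with $|x| = n$ such a machine, having consumed the prefix $x$ and the marker $\#$, has only $O(\log n)$ bits of work-tape with which to remember which of the $2^n$ possible prefixes it has read, and must nonetheless reject every suffix different from $x^R$.

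Finally, I would combine this with the Corollary stating that every J language is accepted by a one-way $\log n$ space-bounded NTM. Were $L$ a J language, it would be accepted by such a machine, contradicting the second step; hence $L$ is context-free but not a J language, which establishes the claim. I do not expect any real obstacle here: the substance lies entirely in the earlier space-complexity upper bound for J languages (and in the classical lower bound for $L$), and the only remaining point — that $L$ is context-free — is routine.
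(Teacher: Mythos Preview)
Your proposal is correct and follows exactly the paper's own argument: the paper uses the very same language $L=\{x\#x^R\mid x\in\{0,1\}^+\}$, notes in the paragraph preceding the corollary that $L$ is not acceptable by a one-way $\log n$ space-bounded NTM (via the crossing/information argument you reproduce), and then lets the corollary follow from the earlier space upper bound for J languages. Your only addition is the explicit CFG for $L$, which the paper leaves implicit.
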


\section{A grammatical characterisation of J languages}

In this section, we provide a grammatical characterisation of
J languages. The grammar is an extension of the right-linear simple
matrix grammar studied in~\cite{Ib70}.
  
Let $\Sigma$ be the set of terminal symbols.
The non-terminal symbols are partitioned into
two disjoint sets, $\cal Q$ and $\cal R$.
There is a unique start non-terminal $S_0 \in \cal Q$
from which all derivations start from.
The rules are of two types:\\

\noindent
{\bf Basic Rules:}

\begin{enumerate}

\item

$S \rightarrow  w$,  where $w \in \Sigma \cup \{\epsilon\}$
and $S \in \cal Q$ does not appear on the RHS of any  basic
rule, but can appear in a matrix rule 6 below.

\item
$S \rightarrow  S_1 | S_2$, where $S, S_1, S_2$ are
distinct non-terminals in $\cal Q$, and $S$ does not
appear on the RHS of any basic rule, but can appear in a matrix
rule~6 below.

\item

$S \rightarrow  S_1  S_2$,  where $S, S_1, S_k$ are
distinct non-terminals in $\cal Q$, and $S$ does does not appear
on the RHS of any basic rule, but can appear in a matrix rule 6 below.

\item

$S \rightarrow SS$, where $S \in \cal Q$
does not appear on the RHS of any basic rule (except in this rule),
but can appear in a matrix rule 6 below.

\item

$S \rightarrow (A_{11} A_{12} \cdots A_{1m}, \dots, A_{k1} A_{k2} \cdots A_{km})$,
where $m \ge 1$, $k \ge 1$, 
each $A_{ij}$ is a non-terminal in $\cal R$ and
$S \in \cal Q$ can appear on the RHS of basic rules~2, 3, 4,
but cannot appear in a matrix rule 6 below.
\end{enumerate}

\noindent
{\bf Right-Linear Simple Matrix Rules:}

\begin{enumerate}
\item [6.]
$[A_1 \rightarrow S_1A_1 , \dots,  A_k \rightarrow S_kA_k]$,
where $k \ge 1$, each $A_i$ a non-terminal in $\cal R$,
and each $S_i \in \cal Q$
(subject to the restriction in rule 5 above).

{\em Restriction 1:} We require that if
$[A_1 \rightarrow S_1A_1 , \dots,  A_k \rightarrow S_kA_k]$ and
$[A_1 \rightarrow S_1'A_1 , \dots,  A_k \rightarrow S_k'A_k]$
are both matrix rules, then $S_i = S_i'$ for $1 \le i \le k$.
Thus, the RHS is unique for the given $A_i$'s on the LHS.

\item [7.]
$[A_1 \rightarrow w_1, \dots, A_k \rightarrow w_k]$,
where $k \ge 1$, each $A_i$ a non-terminal in $\cal R$,
each $w_i$ in $\Sigma^*$.
\end{enumerate}

\noindent
The derivation of a string $w \in \Sigma^*$ in the language starts from the
non-terminal~$S_0$.  If at some point during the derivation,
an intermediate string is reached that contains a non-terminal $S$
for which a rule of form~5 is applied, this $S$
will be replaced by an $n$-tuple
$(A_{11} A_{12} \cdots A_{1m}, \dots, A_{k1} A_{k2} \cdots A_{km})$.
Next, a rule of form 6 is applied in parallel, i.\,e., 
application of the rule rewrites the leftmost
non-terminal of each of the $k$-coordinates.  Application
of rule 6 is done $r \ge  0$ times, where $r$ is chosen
non-deterministically; after which rule 7 is applied.
The process is repeated
for the next leftmost non-terminal of each coordinate.
At the end, when all $k$ coordinates are non-null
strings in $\cal Q^+$, we
``merge'' the $k$ components into a single string.
Then the derivation continues until $w$ is reached.

We can prove the following result.

\begin{theorem} \label{thm1}
The languages generated by  ERLSMGs
are exactly the J languages, which allow union
and positive closure in exponentiation.
\end{theorem}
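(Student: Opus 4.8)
The plan is to establish the two inclusions separately, each by a structural induction that translates the primitives of one formalism into small gadgets of the other, keeping the book-keeping (the $\cal Q/\cal R$ partition, the ``occurs / does not occur on a right-hand side'' side conditions of rules~1--5, and Restriction~1) consistent by always using fresh copies of non-terminals when gluing gadgets together.

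First I would show that every J language in the extended sense (union and positive closure allowed inside exponentiation-terms) is generated by an ERLSMG, by induction on the J expression $\alpha$, producing a grammar $G_\alpha$ with start non-terminal $S_\alpha\in\cal Q$ and $\lan{G_\alpha}=\lan{\alpha}$. The atomic expressions $\epsilon$ and $v\in\Sigma$ are rule~1; a union $\alpha\cup\beta$ is rule~2 over disjoint copies of $G_\alpha,G_\beta$; a concatenation $\alpha\beta$ is rule~3. A positive closure $\alpha^+$ with $\alpha$ free of exponentiation is obtained as $S\to S_\alpha S'$ together with a one-coordinate instance of rules~5--7 that makes $S'$ generate the Kleene star $\lan{\alpha}^{*}$ (so $\lan{\alpha}\cdot\lan{\alpha}^{*}=\lan{\alpha^+}$); the general case of $\alpha^+$ uses rule~4, $S\to SS$, paired with the only base case rule~4's side condition permits, namely a rule-5 production for $S$, and needs some additional care so that no spurious $\epsilon$ is introduced. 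The interesting case is an exponentiation $\beta_1^{n_1}\cdots\beta_k^{n_k}$: by Remark~\ref{rem:one} first replace the fixed exponents by $1$'s; then launch, from rule~5, a simple-matrix block with \emph{one column per distinct exponent variable}, so that the parallel application of rule~6 forces the number of repetitions in matching columns to coincide, and with enough \emph{coordinates} to write the exponentiation-terms down in their left-to-right order; matrix slots not carrying a genuine $\beta_j$ are filled by $\cal Q$-non-terminals generating $\epsilon$, and the fixed strings occurring between terms are emitted by the terminal right-hand sides of rule~7. Since an exponentiation-term never contains an exponentiation, each $\beta_j$ is handled by a disjoint copy of the rules~1--4 part already built, and it may use union (rule~2) and positive closure (rule~4) --- this is precisely where the proviso ``which allow union and positive closure in exponentiation'' comes from.

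For the converse I would read a J expression off an arbitrary ERLSMG. The side conditions attached to rules~1--5 force the ``$\cal Q$-dependency graph'' to be acyclic except for the self-loops of rule~4, so I can define, by induction along this graph, a J expression $\alpha_S$ with $\lan{\alpha_S}=L(S)$ for every $S\in\cal Q$: rule~1 contributes a symbol, rule~2 a union, rule~3 a concatenation, and a rule-4 non-terminal --- which, by the side conditions, must also carry a rule-5 production as its only possible base case --- contributes a positive closure. A rule-5 production is the source of exponentiation: by Restriction~1 the matrix rules attached to each $\cal R$-non-terminal are unique, so processing the $k$-tuple created by rule~5 through rules~6 and~7 and then merging produces exactly a string of the shape $\prod_{i=1}^{k}\prod_{j=1}^{m}\,\hat S_{ij}^{\,r_j}\,\hat w_{ij}$ in which the repetition counts $r_1,\dots,r_m\ge 0$ are shared across the coordinates~$i$; this is a J exponentiation once one uses a single integer variable per column, regards the fixed strings $\hat w_{ij}$ as terms with exponent~$1$, and takes a finite union over the set of columns whose count is nonzero so as to turn the ``$\ge 0$'' counts into genuine exponents in $\mN$. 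Crucially, a non-terminal $\hat S_{ij}$ that occurs inside a rule-6 production cannot itself be the subject of a rule-5 production, so $\alpha_{\hat S_{ij}}$ is free of exponentiation, exactly as an exponentiation-term must be.

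The routine-but-delicate part --- and the place I expect the proof to be most fragile --- is the interface: keeping the two non-terminal classes disjoint and the right-hand-side restrictions of rules~1--5 intact while composing gadgets, respecting Restriction~1 when several matrix rules share a left-hand side, and, in both directions, reconciling the fact that a J exponent ranges over $\mN$ while the number of applications of a rule-6 block ranges over $\mN\cup\{0\}$. The last mismatch is harmless because allowing the exponent $0$ does not change the class: $\beta^p$ with $p\ge 1$ equals $\beta\,\beta^{p-1}$ with $p-1\ge0$ (the fixed leading copy being absorbed into the surrounding concatenation), and conversely an exponentiation in which some block is permitted to vanish is the finite union over the blocks that actually occur; one should prove this equivalence once and then adopt, in each direction, whichever convention is convenient. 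Finally, I would take particular care to pin the class down exactly --- checking that the exponentiation-terms the grammar can realise are precisely those permitting union and positive closure (so that nested exponentiation appears only disguised as a positive closure), and that the stray $\epsilon$'s contributed by the $r=0$ case of rule~6 are accounted for --- so that the two inclusions meet at the same family of languages.
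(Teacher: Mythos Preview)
The paper does not supply a proof of this theorem: it is announced with ``We can prove the following result'' and then stated bare, the argument having been omitted for space reasons (as the introduction warns). So there is nothing in the paper to compare your proposal against. Your two-sided structural translation --- building an ERLSMG by induction on the J expression, and reading a J expression off the grammar by induction along an acyclic dependency order --- is the natural strategy and is almost certainly what the authors have in mind; your handling of rule~4 (that a non-terminal carrying $S\to SS$ can only have rule~5 as its other production) and of the $\mN$ versus $\mN\cup\{0\}$ mismatch for exponents is correct.

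One place in your converse direction is not quite right and deserves another pass. You assert that because a non-terminal $\hat S_{ij}$ occurring in a rule-6 production cannot itself carry a rule-5 production, the expression $\alpha_{\hat S_{ij}}$ is ``free of exponentiation, exactly as an exponentiation-term must be.'' Both halves are off. First, Definition~\ref{def:Jlan} forbids $\cup$ and $^+$ inside an exponentiation-term but does \emph{not} forbid exponentiation itself, so nested exponentiation is already legal in ordinary J expressions. Second, although $\hat S_{ij}$ cannot carry rule~5, it \emph{can} carry rule~2 or rule~3, and the non-terminals $S_1,S_2$ on the right-hand side of such a rule then appear on the RHS of a basic rule; by the side conditions of rules~1--4 this forces $S_1,S_2$ to carry rule~5 (the only rule whose side condition permits appearing on the RHS of a basic rule). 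Hence the language of $\hat S_{ij}$ may perfectly well involve an exponentiation one level down, and your inductive reading-off must recurse through these rule-5 grandchildren rather than bottom out at $\hat S_{ij}$. This in turn means the well-foundedness you need is acyclicity of the \emph{full} $\cal Q$/$\cal R$ dependency graph, not just of the $\cal Q$-only basic-rule graph you describe; the side conditions as written in the paper do not obviously rule out a cycle of the shape rule~5 $\to$ rule~6 $\to$ rule~2/3 $\to$ rule~5, so you should either exhibit the argument that excludes it or make explicit whatever additional finiteness assumption is needed.
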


\begin{corollary} \label{cor1}
The languages generated by ERLSMG's in which the S's on the left-hand-side of rules
of forms 2 and 4 do not appear on the right-hand-sides of rules of form 6 are exactly the J languages.
\end{corollary}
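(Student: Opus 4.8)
The plan is to show that the restricted ERLSMGs generate exactly the J languages \emph{without} union and positive closure in exponentiation, by tracing through the equivalence established in Theorem \ref{thm1} and checking what the restriction on rules of forms 2 and 4 corresponds to on the J-expression side. Recall that Theorem \ref{thm1} gives, for each ERLSMG, a J expression (with the extended exponentiation) denoting the same language, and conversely. The key observation is that in the construction underlying Theorem \ref{thm1}, a non-terminal $S \in \cal Q$ that appears on the right-hand side of a matrix rule of form 6 (equivalently, appears among the $A_{ij}$ of a rule of form 5, via the intermediate tuple) is exactly the syntactic device that introduces an exponentiation-term; the way $S$ is generated from $S_0$ by basic rules determines the structure of that exponentiation-term. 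A rule of form 1 or 3 used to derive such an $S$ produces a term that is a concatenation of symbols (no union, no positive closure); a rule of form 2 produces a union inside the term; a rule of form 4 produces a positive closure inside the term.

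First I would make precise the correspondence ``$S$ appears on the RHS of a form-6 rule $\iff$ $S$ becomes an exponentiation-term.'' This is essentially already in the derivation semantics described before the statement of Theorem \ref{thm1}: the tuple $(A_{11}\cdots A_{1m}, \dots, A_{k1}\cdots A_{km})$ from rule 5, rewritten by $r$ parallel applications of rule 6 and then rule 7, produces $k$ coordinates each of which is a shuffle-aligned product of blocks, and after merging this is precisely an exponentiation $\beta_1^{n_1}\cdots\beta_\ell^{n_\ell}$ in the sense of Definition \ref{def:Jlan}, where each $\beta$-block is the language generated from the corresponding $S_i \in \cal Q$ appearing on the RHS of the form-6 rule. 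Second, I would observe that Restriction 5 (``$S \in \cal Q$ can appear on the RHS of basic rules 2, 3, 4 but cannot appear in a matrix rule 6'') already forbids the \emph{outermost} $S$ of rule 5 from being an exponent block; the new restriction in the corollary additionally forbids the $S_i$'s occurring \emph{inside} the form-6 rules from having been produced by rules 2 or 4. Translating through the equivalence, this says exactly that no exponentiation-term $\beta_j$ contains the union operator or the positive-closure operator --- which is precisely the definition of a (plain) J expression as opposed to the extended one appearing in Theorem \ref{thm1}.

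Concretely, the proof would have two directions. For the forward direction, given a restricted ERLSMG, apply the construction of Theorem \ref{thm1} to obtain an extended J expression; then argue that the restriction forces every exponentiation-term in that expression to be union-free and positive-closure-free, hence the expression is an ordinary J expression, so the language is a J language. For the converse, given an ordinary J expression, apply the (converse direction of the) construction in Theorem \ref{thm1} to obtain an ERLSMG; then check that, because no exponentiation-term uses union or positive closure, the non-terminals $S_i$ that end up on the right-hand sides of form-6 rules are never introduced by rules of form 2 or 4, so the grammar satisfies the restriction. Both directions are bookkeeping on top of Theorem \ref{thm1}, using Remark \ref{rem:one} to normalise fixed exponents if convenient.

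The main obstacle I expect is not conceptual but one of precision: the equivalence in Theorem \ref{thm1} is described only at the level of ``we can prove'', so one must reconstruct enough of its proof to see \emph{which} non-terminals land on the RHS of form-6 rules and \emph{how} the derivation tree of such a non-terminal from $S_0$ via basic rules maps onto the internal structure (union / concatenation / positive closure) of the corresponding exponentiation-term. Once that dictionary is pinned down, the restriction ``$S$'s on the LHS of rules 2 and 4 do not appear on RHS of form-6 rules'' reads off immediately as ``exponentiation-terms contain neither union nor positive closure,'' and the corollary follows. A secondary subtlety is handling the case where an exponentiation-term is $\epsilon$ (allowed by Definition \ref{def:Jlan}), which corresponds to a rule of form 1 with $w = \epsilon$; this is consistent with the restriction since form 1 is not excluded, so it causes no trouble.
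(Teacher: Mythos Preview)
Your proposal is correct and matches the paper's intended argument: the paper omits the proof (as it does for Theorem~\ref{thm1} itself), but the corollary is meant to follow immediately from Theorem~\ref{thm1} via exactly the dictionary you identify---form-2 rules correspond to union, form-4 rules to positive closure, and the $S_i$'s on the right-hand side of form-6 rules correspond to exponentiation-terms---so that forbidding form-2/form-4 left-hand-sides from appearing in form-6 right-hand-sides is precisely the constraint that exponentiation-terms contain neither union nor positive closure. Your remark that the structural restrictions already built into rules~1--5 prevent union or positive closure from sneaking into an exponentiation-term indirectly (through a chain of basic rules) is the one point that needs checking, and you have handled it correctly.
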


\section{Final remarks}
\label{sec:fr}
In Section \ref{sec:bd} we said that the way to accept languages (sets of vectors) considered by us differs from the standard one used in Petri nets (concatenations of the labels of firing sequences) \cite{hackBook,Jan86}.
The reason why we did not consider this standard way in the present paper is because we wanted here to focus only on the topology. 
(We are in the process of writing a paper discussing the relations
between these two different ways of accepting languages).

In \cite{Fbook,FriTR08} it is shown how the results obtained from the computational power of P/T system whose underlying net is composed of {\em joins} and {\em fork} can facilitate the study of the computational power of models of {\em membrane systems} (also known as {\em P systems}) \cite{GPa00} based on multiset rewriting. 
 These results use a definition of {\it equivalence} (also present in \cite{Fbook,FriTR08}).
This is the ``new way to analyse the computational power of a formal system" we mentioned in Section \ref{sec:intr}.

In a nutshell, the idea is the following: if a formal system $S$ can simulate {\em fork}, {\em join} and their composition, then the results on the computational power of P/T systems whose underlying net is composed of {\em joins} and {\em fork} are also valid to $S$.

In \cite{Fbook,FriTR08} it is shown that P systems with catalysts can simulate a {\em fork} using rules of the kind $a \ra b_1b_2$, while the simulation of a {\em join} does not require the use of such rules.
So, knowing from \cite{Fbook,FriTR08} how P systems with catalysts can simulate {\em join} and Theorem \ref{th:xpt}, we can say that the family of languages generated by P systems with catalysts not using rules of the kind $a \ra b_1b_2$ is J.

Using the definitions and results of P systems with catalysts in \cite{Fbook,FriTR08} we can be more precise and state:

\begin{corollary}
\mbox{ }
\begin{itemize}
\item The family of languages accepted by P systems with catalysts of degree 2 and 2 catalysts not using rules of the kind $a \ra b_1b_2$ is J;
\item the family of languages accepted by purely catalytic P systems of degree 2 and 3 catalysts not using rules of the kind $a \ra b_1b_2$ is J.
\end{itemize}
\end{corollary}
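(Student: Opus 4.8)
The plan is to derive both items directly from Theorem~\ref{th:xpt} together with the simulation results for P systems with catalysts established in \cite{Fbook,FriTR08}, via the notion of \emph{equivalence} recalled in Section~\ref{sec:fr}. Recall that if a formal system $S$ can simulate \emph{join}, \emph{fork} and their compositions, then every result on the computational power of P/T systems whose underlying net is composed of these building blocks carries over to $S$; in particular, if $S$ can simulate \emph{join} and the composition of \emph{join}s but \emph{not} \emph{fork}, the languages accepted by $S$ coincide with those accepted by P/T systems whose underlying net is composed of \emph{join}s only, i.e., with J P/T systems.

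First I would invoke the fact, proved in \cite{Fbook,FriTR08}, that a P system with catalysts simulates a \emph{fork} by means of a rule of the kind $a \ra b_1b_2$, whereas the simulation of a \emph{join} uses no such rule. Hence restricting attention to P systems with catalysts that do not use rules of the kind $a \ra b_1b_2$ amounts exactly to restricting the underlying topology to nets that can be assembled from \emph{join}s alone. The second ingredient is the explicit resource count, also from \cite{Fbook,FriTR08}: a \emph{join}, and every composition of \emph{join}s, can be simulated by a P system with catalysts of degree $2$ using $2$ catalysts, and, in the purely catalytic case, by one of degree $2$ using $3$ catalysts. Conversely, each such restricted P system, being unable to realise a \emph{fork}, can only encode a net composed of \emph{join}s, so its accepted language is the language of a J P/T system.

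Putting these together, the family of languages accepted by P systems with catalysts of degree $2$ and $2$ catalysts not using rules $a \ra b_1b_2$ equals the family of languages accepted by J P/T systems, and likewise for purely catalytic P systems of degree $2$ and $3$ catalysts. By Theorem~\ref{th:xpt} this family is precisely the class of J languages, which establishes both items of the corollary.

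The step I expect to be the main obstacle is the converse inclusion — showing that \emph{every} language accepted by one of these restricted P systems is a J language — since this requires verifying that dropping the rules $a \ra b_1b_2$ really does confine the information flow of the system to join-shaped motifs, with no hidden mechanism (through the catalysts or through the degree-$2$ membrane structure) that would reintroduce fork-like behaviour. This is exactly where the definition of equivalence from \cite{Fbook,FriTR08} and the careful bookkeeping of the (two, resp.\ three) catalysts have to be used; once faithfulness of the simulation with respect to the accepted vectors is in place, the argument reduces to the already-proved Theorem~\ref{th:xpt}.
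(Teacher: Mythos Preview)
Your proposal is correct and matches the paper's own justification: the paper gives no formal proof of this corollary, only the discussion immediately preceding it, which invokes exactly the ingredients you name --- the equivalence notion from \cite{Fbook,FriTR08}, the fact that simulating a \emph{fork} requires rules $a\ra b_1b_2$ while simulating a \emph{join} does not, the specific degree/catalyst counts from \cite{Fbook,FriTR08}, and Theorem~\ref{th:xpt}. Your write-up is in fact more explicit than the paper's, and your caveat about the converse inclusion is a fair observation that the paper leaves entirely to the cited references.
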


We end this paper with an open problem.

In the rule of form 6, we had a restriction that if
$$[A_1 \rightarrow S_1A_1 , \dots,  A_k \rightarrow S_kA_k] \mbox{ and }
[A_1 \rightarrow S_1'A_1 , \dots,  A_k \rightarrow S_k'A_k]$$
are both matrix rules, then $S_i = S_i'$ for $1 \le i \le k$.
Suppose we remove this restriction. 
Is there an extension of the J P/T systems that can characterise these grammars?

\bibliographystyle{eptcs}
\bibliography{frisco}

\end{document}